\documentclass{article}

\usepackage{microtype}
\usepackage{graphicx}
\usepackage{subcaption}
\usepackage{booktabs} % for professional tables
\usepackage{algorithm2e}
\usepackage{lipsum}
\usepackage{float}

\usepackage{hyperref}

\usepackage[accepted]{icml2026_testing}

\usepackage{amsmath}
\usepackage{amssymb}
\usepackage{mathtools}
\usepackage{amsthm}
\usepackage{bbm}

\usepackage[capitalize,noabbrev]{cleveref}

\theoremstyle{plain}
\newtheorem{theorem}{Theorem}[section]
\newtheorem{proposition}[theorem]{Proposition}

\newtheorem{corollary}[theorem]{Corollary}
\theoremstyle{definition}
\newtheorem{definition}[theorem]{Definition}

\theoremstyle{plain}
\newtheorem{remark}[theorem]{Remark}

\usepackage[textsize=tiny]{todonotes}

\icmltitlerunning{\textbf{Scalable Approximate $\text{SNR}$-Optimised Polynomial Stein Discrepancies}}

\begin{document}

\twocolumn[
  \icmltitle{$\lambda$\textbf{-PSD: Scalable Approximate $\text{SNR}$-Optimised Polynomial Stein Discrepancies}}

  % It is OKAY to include author information, even for blind submissions: the
  % style file will automatically remove it for you unless you've provided
  % the [accepted] option to the icml2026 package.

  % List of affiliations: The first argument should be a (short) identifier you
  % will use later to specify author affiliations Academic affiliations
  % should list Department, University, City, Region, Country Industry
  % affiliations should list Company, City, Region, Country

  % You can specify symbols, otherwise they are numbered in order. Ideally, you
  % should not use this facility. Affiliations will be numbered in order of
  % appearance and this is the preferred way.
  \icmlsetsymbol{equal}{*}

\begin{icmlauthorlist}
    \icmlauthor{Minh-Long Nguyen}{xxx,zzz}
    \icmlauthor{Thanh-Long Vu}{xxx,comp}
    \icmlauthor{Christopher Drovandi}{xxx,yyy,zzz}
    \icmlauthor{Leah F. South}{xxx,zzz}
    \icmlauthor{Trung-Tin Nguyen}{xxx,yyy,zzz}
\end{icmlauthorlist}

\icmlaffiliation{xxx}{School of Mathematical Sciences, Queensland University of Technology, Australia}
\icmlaffiliation{comp}{Quantium, Brisbane, Australia}
\icmlaffiliation{yyy}{ARC Centre of Excellence for the Mathematical Analysis of Cellular Systems (MACSYS)}
\icmlaffiliation{zzz}{Centre for Data Science, Queensland University of Technology, Australia}

  \icmlcorrespondingauthor{Minh-Long Nguyen}{minhlong.nguyen@hdr.qut.edu.au}

  % You may provide any keywords that you find helpful for describing your
  % paper; these are used to populate the "keywords" metadata in the PDF but
  % will not be shown in the document
  \icmlkeywords{Machine Learning, ICML}

  \vskip 0.3in
]

% this must go after the closing bracket ] following \twocolumn[ ...

% This command actually creates the footnote in the first column listing the
% affiliations and the copyright notice. The command takes one argument, which
% is text to display at the start of the footnote. The \icmlEqualContribution
% command is standard text for equal contribution. Remove it (just {}) if you
% do not need this facility.

% Use ONE of the following lines. DO NOT remove the command.
% If you have no special notice, KEEP empty braces:
\printAffiliationsAndNotice{}  % no special notice (required even if empty)
% Or, if applicable, use the standard equal contribution text:
% \printAffiliationsAndNotice{\icmlEqualContribution}

    \begin{abstract}
Polynomial Stein discrepancies (PSD) provide a scalable alternative to kernel Stein methods for measuring sample quality and goodness-of-fit testing, but their statistical properties remain poorly understood. We show that increasing polynomial degree primarily amplifies signal without adequately controlling variance, rather than directly optimising the signal-to-noise ratio (SNR). Under suitable assumptions, this might lead to a failure mode in which the $\text{SNR}^2$ can provably decay exponentially with polynomial degree. Motivated by this observation, we reformulate Stein discrepancy construction as an explicit $\text{SNR}^2$ maximisation problem, yielding a Rayleigh quotient over Stein features. This perspective motivates $\lambda$-PSD, an approximate scalable covariance-aware reweighting scheme defined in a low-dimensional subspace. Under Gaussian settings, we show that $\lambda$-PSD avoids the exponential $\text{SNR}^2$ collapse and achieves a stable $\text{SNR}^2$. Empirically, $\lambda$-PSD substantially improves test power while retaining linear-time complexity in the number of samples, highlighting the importance of SNR-aware design for scalable Stein discrepancies.
\end{abstract}

{\it Keywords:} Computational statistics $\cdot$ Stein's method $\cdot$ sample quality measure  $\cdot$ hypothesis testing  $\cdot$ scalable Monte Carlo

\section{Introduction}

Assessing sampling quality remains a fundamental problem in modern Bayesian computation \citep{bhattacharya2024grand}\nocite{brooks2011handbook, gelman1992inference,vehtari2021rank}. In many settings, the target distribution \(p\) is known only up to a normalising constant, while the approximation \(q\) is represented through samples. Stein discrepancies provide a principled framework for directly measuring discrepancy between \(p\) and \(q\) using only score evaluations of the target distribution \citep{gorham2015measuring}. Among these, kernel Stein discrepancy (KSD; \citealt{chwialkowski2016kernel, liu2016kernelized, gorham2017measuring}) have strong theoretical guarantees but typically require quadratic computational complexity in the number of samples $n$, limiting scalability in large-scale settings \citep{jitkrittum2017linear}.
Recent work on polynomial Stein discrepancy (PSD; \citealt{srinivasan2024polynomial}) addresses this limitation by replacing kernel witness functions with polynomial Stein features of order \(Q\), yielding computationally efficient linear-time discrepancy measures (see Section~\ref{sec:related} for further discussion of linear-time Stein discrepancies). PSD is particularly effective at detecting moment convergence under the Bernstein-von Mises limit \citep{srinivasan2024polynomial}. However, its statistical power is highly sensitive to the choice of polynomial order \(Q\), and the statistical mechanisms underlying this behaviour remain poorly understood. 

\subsection{Contributions}

Our main contributions are summarised as follows:

\begin{itemize}
    
    \item We identify and theoretically characterise a previously unstudied limitation of PSD, showing that the asymptotic scaling of $\text{SNR}^2$ may decay exponentially with polynomial degree under Gaussian settings.

    \item We introduce an $\text{SNR}^2$-optimised Stein discrepancy framework that reformulates discrepancy construction as a Rayleigh quotient optimisation problem, leading to the proposed \(\lambda\)-PSD estimator, which achieves a stable $\text{SNR}^2$ under Gaussian settings; we further discuss extensions and generalisations of the \(\lambda\)-PSD framework.

    \item We demonstrate through simulation studies that \(\lambda\)-PSD substantially improves robustness to the choice of \(Q\) and achieves competitive goodness-of-fit testing performance while retaining linear-time complexity.
    
\end{itemize}

\section{Related Works}
\label{sec:related}
\paragraph{(Kernel) Stein Discrepancies}

Stein discrepancies provide a principled framework for directly measuring discrepancy between \(p\) and \(q\) using only score evaluations of the target distribution \citep{gorham2015measuring}. By combining Stein operators \citep{stein1972bound} with rich function classes, they construct integral probability metrics \cite{muller1997integral} that goes to zero when the sampling distribution matches the target. Stein discrepancies have since been widely applied to goodness-of-fit testing, importance sampling, and model training, to just name a few \citep{liu2016kernelized, liu2016stein, liu2017black, grathwohl2020learning}. 

A large body of work has focused on kernel Stein discrepancy (KSD), which combines the first-order Langevin-Stein operator with reproducing kernel Hilbert spaces to obtain computationally tractable discrepancy measures \citep{liu2016kernelized, chwialkowski2016kernel, gorham2017measuring}. Under suitable kernels and regularity conditions on $p$, KSD is convergence determining, meaning that vanishing discrepancy implies weak convergence of the empirical distribution to the target distribution \citep{gorham2017measuring}.  However, standard KSD estimators require quadratic computational complexity due to pairwise kernel evaluations, limiting scalability in large-sample settings \citep{jitkrittum2017linear}.

\paragraph{Goodness-of-Fit Testing}

In goodness-of-fit testing\footnote{We consider the hypothesis testing problem 
\(H_0: q = p\) against \(H_1: q \neq p\).}, many Stein-based methods, particularly KSD-based tests, rely on asymptotic null distributions or (wild) bootstrap procedures \citep{leucht2013dependent,  chwialkowski2016kernel, liu2016kernelized}. A smaller class of methods instead uses asymptotically normal test statistics, enabling $z$-test-style calibration \citep{grathwohl2020learning, repasky2023neural}. While some Stein-based tests rely on heuristic constructions, such as the widely used median heuristic for Gaussian kernels \citep{garreau2017large}, many approaches explicitly optimise test power through adaptive feature or kernel design \citep{jitkrittum2017linear, grathwohl2020learning}.

\paragraph{Linear-Time Stein Discrepancies.}
Standard KSD estimators typically incur quadratic computational complexity due to pairwise kernel evaluations. This has motivated scalable alternatives such as the finite-set Stein discrepancy (FSSD; \citealt{jitkrittum2017linear}), random-feature Stein discrepancy (RFSD; \citealt{huggins2018random}) and polynomial Stein discrepancy (PSD; \citep{srinivasan2024polynomial}, all of which achieve (near) linear-time complexity through finite-dimensional feature constructions. FSSD relies on carefully tuned test locations and kernel bandwidths, typically requiring sample splitting \citep{jitkrittum2017linear}, while RFSD depends on several example-specific design choices and may perform poorly under indirect sampling \citep{srinivasan2024polynomial}. Unlike most methods discussed here, PSD specifically targets moment discrepancies. Its power need not deteriorate with increasing $d$ when the mismatch is distributed across many coordinates \citep{srinivasan2024polynomial}. However, PSD is sensitive to the choice of $Q$, and the mechanisms underlying this behaviour remain poorly understood.

\section{Polynomial Stein Discrepancy in $\mathcal{L}_1$}

We first introduce some common notation used throughout this section. Let \(p\) denote the target probability measure on $\mathbb{R}^d$ with continuously differentiable density. We assume the score function \(s_p \coloneqq \nabla \log p\) can be evaluated, although direct integration under \(p\) is generally intractable \citep{gorham2017measuring}. Let \(q\) denote another probability measure on \(\mathbb{R}^d\) with continuously differentiable density, known through samples \(\textbf{x} = x_1,\dots,x_n\). We are interested in quantifying the discrepancy between $q$ and $p$ through \textbf{x}. Most results in this section are stated under the assumption that \(x_1,\dots,x_n\) are i.i.d.\ samples; although this assumption can often be relaxed, it does not restrict the methods to i.i.d.\ sampling settings.

We denote by \( \langle \cdot, \cdot \rangle_F \) the Frobenius inner product between two matrices of the same dimension. The corresponding Frobenius norm is denoted by \( \|\cdot\|_F \). For a vector \( v \), \( \mathrm{diag}(v) \) denotes the diagonal matrix with entries of \( v \) on its diagonal. Let \( \mathrm{vec}(\cdot) \) denote the vectorisation operator that stacks the columns of a matrix into a single vector. Finally, let $z_{[i]}$ denote the $i$th component of a vector $z$.

\subsection{Definitions}

We consider the first-order Langevin-Stein operator
\begin{equation}
\mathcal{L}_1 h(x) = s_p(x)^\top h(x) + \nabla \cdot h(x),
\end{equation}
for a vector-valued test function $h:\mathbb{R}^d \to \mathbb{R}^d$. To obtain a computationally efficient class of test functions, we restrict $h$ to a non-interacting polynomial basis
\begin{equation}
\label{eqn:poly}
h(x) = \sum_{i=1}^{Q} A_i x^{\odot i} + b, 
\end{equation}
where $A_i \in \mathbb{R}^{d \times d}$, $b \in \mathbb{R}^d$, and $\odot$ denotes the Hadamard product. This class admits a degenerate case when $Q = 0$, in which the test function reduces to the constant vector field $h(x) = b$. 

It is possible to show that
\begin{equation}
\mathcal{L}_1 h(x) = \sum_{i=1}^{Q} \langle A_i, \Phi_i(x) \rangle_F + b^\top s_p(x),
\end{equation}
where the feature maps $\Phi_i(x) \in \mathbb{R}^{d \times d}$ are given by
\begin{equation}
\Phi_i(x) = s_p(x)(x^{\odot i})^\top + i\,\mathrm{diag}(x^{\odot(i-1)}).
\end{equation}

We define the following polynomial Stein discrepancy, denoted by $\mathrm{PSD}_{\mathcal{L}_1}$, to distinguish it from the $\mathrm{PSD}_{\mathcal{L}_2}$ introduced in \citet{srinivasan2024polynomial}, which is based on the second-order Langevin-Stein operator. Although the constructions differ, analogous generalisations are expected to hold. To the best of our knowledge, neither this $\mathcal{L}_1$ formulation nor its statistical properties have been previously derived or analysed.

\begin{definition}
The polynomial Stein discrepancy (PSD) associated with $\mathcal{L}_1$ is defined as
\begin{equation}
\mathrm{PSD}_{\mathcal{L}_1}(q \,\|\, p) := \sup_{\{A_i\}_{i=1}^Q,\, b} \; \mathbb{E}_{q} \bigl[ \mathcal{L}_1 h(x) \bigr],
\end{equation}
such that
$
\sum_{i=1}^{Q} \|A_i\|_F^2 + \|b\|_2^2 \le 1.
$
\end{definition}
Under regularity condition on tails of $p$ (see \citet{south2022semi, south2022postprocessing}) \footnote{that we assume to hold through-out this work}, if $q = p$, then $\mathbb{E}_p[\mathrm{PSD}_{\mathcal{L}_1}(q \,\|\, p)] = 0$; hence, deviations from zero may indicate a discrepancy between \(p\) and \(q\). 

The maximiser of $\mathrm{PSD}_{\mathcal{L}_1}(q \,\|\, p)$ admits a closed-form expression:

\begin{proposition} \label{prop:optimal}
Let $\mu_i = \mathbb{E}_q[\Phi_i(x)] \in \mathbb{R}^{d\times d}$ for $i \in \{1, \dots, Q\}$, and $\mu_b = \mathbb{E}_q[s_p(x)] \in \mathbb{R}^{d}$. The maximiser of $\mathrm{PSD}_{\mathcal{L}_1}(q \,\|\, p)$ is given by
\begin{equation}
A_i^\star = \frac{\mu_i}{Z}, \qquad b^\star = \frac{\mu_b}{Z},
\end{equation}
where the normalisation constant $Z$ is also the optimal discrepancy value:
\begin{equation}
\label{eqn:optimal}
\mathrm{PSD}_{\mathcal{L}_1}(q \,\|\, p) = Z = \sqrt{\sum_{k=1}^{Q}\|\mu_k\|_F^2 + \|\mu_b\|_2^2}.
\end{equation}
\end{proposition}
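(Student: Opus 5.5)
The plan is to recognise the optimisation in the definition of $\mathrm{PSD}_{\mathcal{L}_1}$ as maximising a linear functional over the unit ball of a Euclidean space, and then to apply Cauchy--Schwarz. I would first stack the parameters into a single vector
\[
\theta = \bigl(\mathrm{vec}(A_1),\dots,\mathrm{vec}(A_Q), b\bigr) \in \mathbb{R}^{Qd^2+d},
\]
so that the constraint $\sum_{i=1}^Q\|A_i\|_F^2+\|b\|_2^2\le 1$ is exactly $\|\theta\|_2\le 1$. This works because the Frobenius norm of $A_i$ equals the Euclidean norm of $\mathrm{vec}(A_i)$.

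Next I would use the representation $\mathcal{L}_1 h(x)=\sum_{i}\langle A_i,\Phi_i(x)\rangle_F+b^\top s_p(x)$ stated above and take expectations under $q$. The $A_i,b$ are deterministic and the pairings are linear, so expectation and inner product can be exchanged. This gives
\[
\mathbb{E}_q[\mathcal{L}_1 h(x)]=\sum_{i=1}^Q\langle A_i,\mu_i\rangle_F+b^\top\mu_b=\langle\theta,m\rangle,
\]
where $m=(\mathrm{vec}(\mu_1),\dots,\mathrm{vec}(\mu_Q),\mu_b)$. This step needs the integrability of $\Phi_i$ and $s_p$ under $q$, i.e.\ that the $\mu_i$ and $\mu_b$ are finite. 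I would state this as implicit in the hypothesis that these means exist. I expect this to be the only point requiring any care, and it is a mild assumption rather than a genuine obstacle. Finally, $\|m\|_2^2=\sum_k\|\mu_k\|_F^2+\|\mu_b\|_2^2=Z^2$.

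Then Cauchy--Schwarz gives $\langle\theta,m\rangle\le\|\theta\|_2\|m\|_2\le Z$ for every feasible $\theta$. If $Z>0$, the choice $\theta^\star=m/Z$ is feasible and attains $\langle m,m\rangle/Z=Z$. Unstacking $\theta^\star$ yields $A_i^\star=\mu_i/Z$ and $b^\star=\mu_b/Z$, and shows that the supremum equals $Z$, which is \eqref{eqn:optimal}. Uniqueness of the maximiser follows from the equality case of Cauchy--Schwarz: equality forces $\theta$ to be a nonnegative multiple of $m$ with $\|\theta\|_2=1$. The degenerate case $Z=0$ needs a separate remark. There $m=0$, the objective is identically zero, the supremum is $0=Z$, and every feasible point is a maximiser, so the formula $\mu_i/Z$ should be read as holding only when $Z>0$.
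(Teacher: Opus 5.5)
Your proposal is correct. Stacking the parameters turns the problem into maximising $\langle\theta,m\rangle$ over the Euclidean unit ball, and Cauchy--Schwarz with its equality case then gives the maximiser, its uniqueness, and the value $Z$; your separate treatment of $Z=0$ is a refinement the paper's statement glosses over. The paper gives no explicit proof of this proposition, but this is the same Cauchy--Schwarz reasoning it uses for the analogous $\mathrm{o}\text{-}\mathrm{PSD}_{\mathcal{L}_1}$ Rayleigh-quotient result in Appendix~\ref{Appendix:SNRproof}, so your approach matches the paper's.
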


In practice, we use the empirical  $\widehat{\mathrm{PSD}}_{\mathcal{L}_1}(q \,\|\, p)$, obtained by replacing the population moments $\{\mu_i, \mu_b\}$ with their respective sample estimates $\{\hat{\mu}_i, \hat{\mu}_b\}$.

In the degenerate case $Q = 0$, the optimization reduces to $\sup_{\|b\|_2 \le 1} b^\top \mu_b$, and the empirical discrepancy simplifies to the $\ell_2$-norm of the empirical average score:
\begin{equation}
\widehat{\mathrm{PSD}}_{\mathcal{L}_1}(q \,\|\, p) = \left\|\hat{\mu}_b\right\|_2 = \left\| \frac{1}{n}\sum_{j=1}^n s_p(x_j) \right\|_2.
\end{equation}

\subsection{Time Complexity}
The empirical Stein moments $\hat{\mu}_i = \frac{1}{n} \sum_{j=1}^n \Phi_i(x_j)$ and $\hat{\mu}_b = \frac{1}{n} \sum_{j=1}^n s_p(x_j)$ allow for a closed-form evaluation of the discrepancy. Specifically, the total computational complexity to evaluate $\widehat{\mathrm{PSD}}_{\mathcal{L}_1}$ is $\mathcal{O}(Qnd^2)$. 
While we focus on this formulation, a scalable $\mathcal{O}(Qnd)$ variant can be obtained by restricting $A_i$ to be diagonal.

\subsection{Characterisation under the Bernstein-von-Mises limit}

Similar to \(\mathrm{PSD}_{\mathcal{L}_2}\), the proposed \(\mathrm{PSD}_{\mathcal{L}_1}\) captures moment convergence under the Bernstein-von Mises limit. However, due to its lower-parameter construction, it characterises at most \(Qd^2+d\) moment conditions, compared to the \(O\!\left(\binom{Q+d}{d}\right)\) conditions captured by \(\mathrm{PSD}_{\mathcal{L}_2}\) \citep{srinivasan2024polynomial}. 

\begin{theorem}\label{thm:BvM_PSD_degreeQ_general}
Let \(p=\mathcal{N}(\mu_p,\Sigma_p)\) with positive-definite $\Sigma_p$, so
\(s_p(x)=-\Sigma^{-1}(x-\mu_p)\). For \(Q\ge1\), define
\begin{equation}
h(x)=\sum_{i=1}^Q A_i\big(x^{\circ i}\big)+b,
\qquad A_i\in\mathbb{R}^{d\times d},\; b\in\mathbb{R}^d.
\end{equation}
Then for any \(q\) with finite moments up to order \(Q+1\),
\begin{equation}
\mathrm{PSD}_{\mathcal{L}_1}(q\,\|\,p)=0
\;\Longleftrightarrow\;
\left\{
\begin{aligned}
&\mathbb{E}_q[X] = \mu_p,\\
&\mathbb{E}_q\!\left[(X_{[l]}-\mu_{p[l]})\,X_{[j]}^{i}\right] 
= \\ &i\,\Sigma_{lj}\,\mathbb{E}_q\!\left[X_{[j]}^{i-1}\right],
\end{aligned}
\right.
\end{equation}
for all $l,j\in\{1,\dots,d\},\; i=1,\dots,Q.$
\end{theorem}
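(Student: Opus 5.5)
By \Cref{prop:optimal}, $\mathrm{PSD}_{\mathcal{L}_1}(q\,\|\,p)=Z$, and $Z$ is the square root of a sum of squared norms. Hence the discrepancy vanishes if and only if $\mu_b=0$ and $\mu_i=0$ for every $i=1,\dots,Q$. The plan is therefore to rewrite each of these vector and matrix equations, in the Gaussian case, as the stated moment identities, checking every step is reversible. The assumption of finite moments up to order $Q+1$ is used first: since $s_p$ is affine, each entry of $\Phi_i(x)$ is a polynomial of degree at most $i+1\le Q+1$. So all $\mu_i$ and $\mu_b$ are finite, and linearity of expectation applies freely.

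\textbf{The $b$-component.} Here $\mu_b=\mathbb{E}_q[s_p(X)]=-\Sigma_p^{-1}(\mathbb{E}_q[X]-\mu_p)$. Because $\Sigma_p^{-1}$ is invertible, $\mu_b=0$ is equivalent to $\mathbb{E}_q[X]=\mu_p$.

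\textbf{The $A_i$-components.} Fix $i\ge 1$. The $(l,j)$ entry of $\Phi_i(x)$ is $s_p(x)_{[l]}\,x_{[j]}^{i}+i\,\delta_{lj}\,x_{[j]}^{i-1}$. It is cleanest to work with the $j$-th column of $\mu_i$, which is
\begin{equation*}
\mathbb{E}_q\!\left[s_p(X)\,X_{[j]}^{i}\right]+i\,\mathbb{E}_q\!\left[X_{[j]}^{i-1}\right]e_j
= -\Sigma_p^{-1}\,\mathbb{E}_q\!\left[(X-\mu_p)\,X_{[j]}^{i}\right]+i\,\mathbb{E}_q\!\left[X_{[j]}^{i-1}\right]e_j .
\end{equation*}
Setting this column to zero and multiplying on the left by $\Sigma_p$, which is an invertible and hence reversible operation, gives
\begin{equation*}
\mathbb{E}_q\!\left[(X-\mu_p)\,X_{[j]}^{i}\right]=i\,\mathbb{E}_q\!\left[X_{[j]}^{i-1}\right]\Sigma_p e_j .
\end{equation*}
Reading off the $l$-th component, and using $(\Sigma_p e_j)_{[l]}=\Sigma_{lj}$, yields exactly the stated identity for all $l,j$. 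Ranging over all columns $j$ and all $i=1,\dots,Q$ shows that $\mu_i=0$ for every $i$ is equivalent to the full family of conditions. Combining this with the $b$-component and the first paragraph gives both directions of the equivalence.

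\textbf{Expected obstacle.} The mathematics is light; the main care is bookkeeping. I need to get the orientation of $s_p(x)(x^{\odot i})^\top$ right, so that rows index $l$ and columns index $j$. I also need the diagonal term $i\,\mathrm{diag}(x^{\odot(i-1)})$ to land in the $j$-th column as $i\,x_{[j]}^{i-1}e_j$, and I must note that $\Sigma$ in the statement means $\Sigma_p$. Finally, each equivalence should be stated as an "if and only if", which relies only on the invertibility of $\Sigma_p$ and the nonnegativity of the squared norms in $Z$.
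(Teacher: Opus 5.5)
Your proposal is correct and follows essentially the same route as the paper. You reduce \(\mathrm{PSD}_{\mathcal{L}_1}(q\,\|\,p)=0\) to \(\mu_b=0\) and \(\mu_i=0\) via Proposition~\ref{prop:optimal}, dispatch \(\mu_b\) by invertibility of \(\Sigma_p\), and undo the factor \(\Sigma_p^{-1}\) in \(\mu_i\). The differences are cosmetic: you work column by column, where the paper writes the single matrix identity \(M^{(i)} = i\,\Sigma_p\,\mathrm{diag}\bigl(\mathbb{E}_q[X_{[1]}^{i-1}],\dots,\mathbb{E}_q[X_{[d]}^{i-1}]\bigr)\), and you state explicitly that the order-\((Q+1)\) moment assumption makes every \(\mu_i\) finite, which the paper leaves implicit.
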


\begin{proof}
    Proof is given in Appendix \ref{Appendix:moment}.
\end{proof}

\begin{corollary}\label{cor:BvM_Q1}
Let \(p=\mathcal{N}(\mu_p,\Sigma_p)\) with positive-definite $\Sigma_p$ and \(h(x)=Ax+b\) (i.e., $Q = 1$).  
Then for any \(q\) with finite second moments,
\[
\mathrm{PSD}_{\mathcal{L}_1}(q\,\|\,p)=0
\;\Longleftrightarrow\;
\left\{
\begin{aligned}
&\mathbb{E}_q[X] = \mu_p,\\
&\mathbb{E}_q[(X-\mu_p)(X-\mu_p)^\top] = \Sigma_p.
\end{aligned}
\right.
\]
\end{corollary}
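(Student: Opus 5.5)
The plan is to specialise Theorem~\ref{thm:BvM_PSD_degreeQ_general} to $Q=1$ and rewrite the resulting moment conditions in matrix form. Taking $Q=1$ there, and using that finite second moments suffice for all quantities involved, $\mathrm{PSD}_{\mathcal{L}_1}(q\,\|\,p)=0$ is equivalent to
\[
\mathbb{E}_q[X]=\mu_p
\quad\text{and}\quad
\mathbb{E}_q\!\left[(X_{[l]}-\mu_{p[l]})X_{[j]}\right]=\Sigma_{lj}
\]
for all $l,j$. Here the $i=1$ case of the theorem has right-hand side $1\cdot\Sigma_{lj}\,\mathbb{E}_q[X_{[j]}^0]=\Sigma_{lj}$.

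If one prefers not to rely on the general theorem, the same equations follow directly from Proposition~\ref{prop:optimal}. The discrepancy vanishes iff $\mu_b=0$ and $\mu_1=0$. Since $s_p(x)=-\Sigma_p^{-1}(x-\mu_p)$ and $\Sigma_p^{-1}$ is invertible, $\mu_b=-\Sigma_p^{-1}(\mathbb{E}_q[X]-\mu_p)=0$ is equivalent to $\mathbb{E}_q[X]=\mu_p$. Next,
\[
\mu_1=\mathbb{E}_q\!\left[-\Sigma_p^{-1}(X-\mu_p)X^\top\right]+I ,
\]
and left-multiplying by $\Sigma_p$ shows that $\mu_1=0$ iff $\mathbb{E}_q[(X-\mu_p)X^\top]=\Sigma_p$.

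The remaining step is to convert $\mathbb{E}_q[(X-\mu_p)X^\top]$ into the centred second moment. Write $X^\top=(X-\mu_p)^\top+\mu_p^\top$. Then
\[
\mathbb{E}_q[(X-\mu_p)X^\top]=\mathbb{E}_q[(X-\mu_p)(X-\mu_p)^\top]+(\mathbb{E}_q[X]-\mu_p)\mu_p^\top .
\]
The second term vanishes under the mean condition. So, given $\mathbb{E}_q[X]=\mu_p$, the two matrix equations are equivalent, and both directions of the corollary follow.

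There is no real obstacle. The only care needed is the order of the argument: the mean condition must be used first, so that the cross term drops out in both directions of the equivalence. One should also remember that the invertibility of $\Sigma_p$ is what lets us pass between $\mu_1=0$, $\mu_b=0$ and the moment identities.
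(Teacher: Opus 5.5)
Your proposal is correct and follows the paper's route, which simply specialises Theorem~\ref{thm:BvM_PSD_degreeQ_general} to $Q=1$. You also make explicit a step the paper leaves implicit. The $i=1$ condition gives only the uncentred identity $\mathbb{E}_q[(X-\mu_p)X^\top]=\Sigma_p$, and this matches the centred covariance identity once the mean condition $\mathbb{E}_q[X]=\mu_p$ removes the cross term $(\mathbb{E}_q[X]-\mu_p)\mu_p^\top$.
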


\begin{proof}
    The result follows directly from the previous theorem by setting \(Q=1\).
\end{proof}

\subsection{Goodness-of-Fit Testing via $\mathrm{PSD}_{\mathcal{L}_1}(q \,\|\, p)$}
\label{section:gof_psd}
The empirical squared discrepancy $\widehat{\mathrm{PSD}}_{\mathcal{L}_1}^2$ can be expressed as a $V$-statistic. Let $\xi(x) \in \mathbb{R}^{Qd^2 + d}$ be a concatenated feature vector consisting of the flattened feature maps $\text{vec}(\Phi_i(x))$ for $i=1,\dots,Q$ and the score $s_p(x)$. We can then write the squared discrepancy as the norm of an empirical mean:
\begin{equation}
\widehat{\mathrm{PSD}}^2_{\mathcal{L}_1}(q \,\|\, p) = \left\| \frac{1}{n} \sum_{j=1}^n \xi(x_j) \right\|_2^2 = \frac{1}{n^2} \sum_{j=1}^n \sum_{k=1}^n k(x_j, x_k),
\end{equation}
where $k(x, x') = \xi(x)^\top \xi(x')$ is a linear kernel in the augmented feature space. This formulation identifies the empirical PSD as a $V$-statistic of rank $Qd^2+d$. By excluding the diagonal terms ($j=k$), one obtains the corresponding $U$-statistic:

\begin{equation}
\widehat{\mathrm{PSD}}_{\mathcal{L}_1, U}^2 = \frac{1}{n(n-1)} \sum_{j \neq k} \xi(x_j)^\top \xi(x_k),
\end{equation}
which provides an unbiased estimator of the squared population discrepancy \citep{liu2016kernelized}. Although this estimator has quadratic computational complexity in n, it can be reformulated as a linear-time expression; see Equation~\ref{eqn:optimal}. This formulation facilitates the use of $U$-statistic asymptotic theory for goodness-of-fit testing \citep{chwialkowski2016kernel, srinivasan2024polynomial}. In practice, implementation typically relies on the wild bootstrap procedure \citep{leucht2013dependent}, which offers robustness in finite samples and handles potential dependencies \citep{chwialkowski2016kernel, srinivasan2024polynomial}. 

However, the statistical power of such tests is  sensitive to the choice of $Q$. Empirical evidence suggests a non-monotonic relationship in certain regimes: while a low order $Q_1$ or a high order $Q_2$ may yield acceptable results, an intermediate choice $Q_1 < Q_3 < Q_2$ can result in significantly degraded performance \citep{srinivasan2024polynomial}. Beyond these local fluctuations, we observe a more fundamental challenge: as the polynomial order $Q$ increases globally, the statistical power consistently degrades (e.g., when $p$ and $q$ differ through moment mismatches of order at most $Q^* < Q$).

To characterise this failure mode, we analyse the asymptotic behaviour of $\mathrm{SNR}^2$ for $\widehat{\mathrm{PSD}}_{\mathcal{L}_1,U}$, focusing on high-order moment contributions under a standard Gaussian target $p = \mathcal{N}(0, I_d)$ and a slightly perturbed sampler $q = \mathcal{N}(\delta \mathbf{1}, \sigma^2 I_d)$.

\begin{theorem}[$\text{SNR}^2$ of $\widehat{\mathrm{PSD}}_{\mathcal{L}_1,U}$]
\label{thm:snr_scaling} Assume $\mathbb{E}_q[\|\xi(X)\|_2^2] < \infty$.
For the $U$-statistic estimator $\widehat{D}_U^2$ of degree $Q$, the $\mathrm{SNR}^2$ satisfies

\begin{equation}
    \mathrm{SNR}(Q)^2 = \frac{\mathbb{E}_q\big[\widehat{D}_U^2\big]^2}{\mathbb{V}_q\big[\widehat{D}_U^2\big]}= \frac{\big(\sum_{i=1}^Q \|\mu_i\|_F^2 + \|\mu_b\|_2^2\big)^2}{\frac{4(n-2)}{n(n-1)} \zeta_{1,Q} + \frac{2}{n(n-1)} \zeta_{2,Q}},
\end{equation}

where $\zeta_{1,Q} = \mathbb{V}_x \big[ \sum_{i=1}^Q \langle \Phi_i(x), \mu_i \rangle_F + s_p(x)^\top \mu_b \big]$ and $\zeta_{2,Q} = \mathbb{E}_{x,x'} \big[ (\xi(x)^\top \xi(x'))^2 \big] - \mathbb{E}_q[\xi(x)^\top \xi(x')]^2$. 

\end{theorem}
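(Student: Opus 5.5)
The plan is to treat $\widehat{D}_U^2=\widehat{\mathrm{PSD}}^2_{\mathcal{L}_1,U}$ as a second-order $U$-statistic with symmetric kernel $k(x,x')=\xi(x)^\top\xi(x')$, and apply Hoeffding's variance formula. The finiteness assumption $\mathbb{E}_q[\|\xi(X)\|_2^2]<\infty$ guarantees that $\mu:=\mathbb{E}_q[\xi(X)]$ exists. It also controls the first-order projection, since $|\xi(x)^\top\mu|\le\|\xi(x)\|_2\|\mu\|_2$. For the second-order term one needs $\mathbb{E}[(\xi(x)^\top\xi(x'))^2]<\infty$. This follows from independence and Cauchy--Schwarz, provided we interpret the assumption as giving finite fourth moments where needed; otherwise $\zeta_{2,Q}$ is simply taken to be finite, as implicit in the statement.

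\emph{Numerator.} Because $x_j$ and $x_k$ are independent for $j\neq k$, we have $\mathbb{E}_q[\xi(x_j)^\top\xi(x_k)]=\mu^\top\mu=\|\mu\|_2^2$. Averaging over the $n(n-1)$ ordered pairs gives $\mathbb{E}_q[\widehat{D}_U^2]=\|\mu\|_2^2$. By construction of $\xi$ as the stacked $\mathrm{vec}(\Phi_i)$ and $s_p$, we get $\|\mu\|_2^2=\sum_i\|\mu_i\|_F^2+\|\mu_b\|_2^2$, using $\|\mathrm{vec}(M)\|_2=\|M\|_F$ and linearity of expectation. This is exactly the squared quantity appearing in Proposition~\ref{prop:optimal}, so squaring it yields the numerator.

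\emph{Denominator.} Write $\widehat{D}_U^2=\binom{n}{2}^{-1}\sum_{j<k}k(x_j,x_k)$ and expand the variance as a sum of covariances over pairs of pairs. A covariance vanishes when the pairs are disjoint. When they share exactly one index it equals $\zeta_1:=\mathbb{V}_x[\mathbb{E}_{x'}k(x,x')]$. When they share both indices it equals $\zeta_2:=\mathbb{V}[k(x,x')]$. Counting the pairs gives the standard identity
\[
\mathbb{V}[\widehat{D}_U^2]=\binom{n}{2}^{-1}\bigl(2(n-2)\zeta_1+\zeta_2\bigr)=\frac{4(n-2)}{n(n-1)}\zeta_1+\frac{2}{n(n-1)}\zeta_2 .
\]
It remains to identify the two terms. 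First, $\mathbb{E}_{x'}k(x,x')=\xi(x)^\top\mu=\sum_i\langle\Phi_i(x),\mu_i\rangle_F+s_p(x)^\top\mu_b$, using $\mathrm{vec}(A)^\top\mathrm{vec}(B)=\langle A,B\rangle_F$, so $\zeta_1=\zeta_{1,Q}$. Second, $\zeta_2=\mathbb{E}[(\xi(x)^\top\xi(x'))^2]-(\mathbb{E}[\xi(x)^\top\xi(x')])^2=\zeta_{2,Q}$.

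Dividing the squared mean by this variance gives the stated formula. The argument is essentially bookkeeping. The only delicate step is the combinatorial count: there are $n(n-1)(n-2)$ ordered pairs-of-pairs sharing one index, which is where the coefficient $4(n-2)/(n(n-1))$ comes from. I would verify this count carefully, for instance by checking small $n$ such as $n=2$ and $n=3$.
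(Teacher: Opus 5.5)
Your proposal is correct and follows the same route as the paper, which proves this by Hoeffding's variance decomposition for the order-two $U$-statistic with kernel $\xi(x)^\top\xi(x')$. Your count of $n(n-1)(n-2)$ pairs-of-pairs sharing one index and your identifications $\zeta_1=\zeta_{1,Q}$, $\zeta_2=\zeta_{2,Q}$ are right, and no fourth-moment hedge is needed: independence of $x,x'$ and Cauchy--Schwarz give $\mathbb{E}[(\xi(x)^\top\xi(x'))^2]\le\bigl(\mathbb{E}_q[\|\xi(X)\|_2^2]\bigr)^2<\infty$ directly from the stated assumption.
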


\begin{proof}
    This is based on Hoeffding decomposition \citep{hoeffding1992class} of variance of $U$-statistic. See Appendix \ref{Appendix:SNRproof}.
\end{proof}

\begin{remark}[Moment-Dominant Scaling for $\mathrm{SNR}^2$ of $\widehat{\mathrm{PSD}}_{\mathcal{L}_1,U}$ under Gaussianity]
\label{prop:asymptoticsnrboot}
Let $p = \mathcal{N}(0, I_d)$ and $q = \mathcal{N}(\delta \mathbf{1}, \sigma^2 I_d)$ with fixed $(n,d)$. In regimes where $n\zeta_{1,Q} \gg \zeta_{2,Q}$, and under a moment-dominance assumption\footnote{This assumes that under expansion of the numerator and denominator, the highest-order moment terms dominate all lower-order moment contributions in magnitude.},
\begin{align}
    \mathrm{SNR}(Q)^2 \asymp \frac{\big((Q+1)!!\big)^2}{\big((2Q+1)!!\big)}
\end{align}
\end{remark}

\begin{proof}
    Sketch of proof is given in Appendix \ref{Appendix:SNRproof}.
\end{proof}

Remark \ref{prop:asymptoticsnrboot}  provides critical practical implications for both $\widehat{\mathrm{PSD}}_{\mathcal{L}_1,U}$ and $\widehat{\mathrm{PSD}}_{\mathcal{L}_2,U}$. Empirically, although their performance depends on $Q$, simply increasing $Q$ is not effective, as the SNR of the corresponding $U$-statistic deteriorates. In this sense, existing PSD frameworks optimise the \textit{wrong} objective: by focusing solely on maximising the signal, they suffer from uncontrolled variance inflation in higher polynomial orders.  A robust GoF test should instead optimise the SNR (or equivalently $\text{SNR}^2$) instead \cite{sutherland2016generative, jitkrittum2017linear, grathwohl2020learning}.

\subsubsection{$z$-test for $\mathrm{PSD}_{\mathcal{L}_1}$}

In this work, we employ a split-sample testing framework to bypass the $\mathcal{O}(BnQd^2)$ computational overhead associated with bootstrapping high-dimensional $U$-statistics. Given samples $\{x_j\}_{j=1}^n \sim q$, we split the data into training and test sets. The training set is used to estimate the optimal $\{A_1^\star, \cdots, A_Q^\star, b^\star\}$, while the test set is used to construct a one-sample $z$-statistic from $u(x) = \sum_{i=1}^{Q} 
\langle A^*_i, \Phi_i(x) \rangle_F
+
b^{*\top} s_p(x)$ for all $x$ in test set. This construction enables a one-sample two-tailed $z$-test, which circumvents the need for a $B$-iteration bootstrap (see Algorithm \ref{alg:psd_train_test}). The validity of this test follows from the standard central limit theorem.

\begin{proposition}
Let $\{x_j\}_{j=1}^n$ be i.i.d.\ samples and let $\mathcal{D}_{\mathrm{tr}}, \mathcal{D}_{\mathrm{te}}$ be the training and test splits with $|\mathcal{D}_{\mathrm{te}}| \to \infty$. Let $h^\star = h^\star(\mathcal{D}_{\mathrm{tr}})$ denote the test function estimated on the training set. Assume $\mathbb{E}_p[(\mathcal{L}_1 h^\star(X))^2] < \infty$ almost surely. Then under the null hypothesis $H_0: \{x_j\}_{j=1}^n \sim p$, the test statistic $z$ from Algorithm~\ref{alg:psd_train_testzzz} satisfies
\begin{align}
z \;\xrightarrow{d}\; \mathcal{N}(0, 1).
\end{align}
Consequently, the test that rejects $H_0$ when $|z| > z_{1-\alpha/2}$ has asymptotic size $\alpha$.
\end{proposition}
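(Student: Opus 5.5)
The plan is to condition on the training split and apply a classical central limit theorem to the test split; because the training and test sets are disjoint, the test function $h^\star$ is fixed given $\mathcal{D}_{\mathrm{tr}}$. Write $m = |\mathcal{D}_{\mathrm{te}}|$ and $u(x) = \mathcal{L}_1 h^\star(x) = \sum_{i=1}^Q \langle A_i^\star, \Phi_i(x)\rangle_F + b^{\star\top} s_p(x)$. The statistic from the algorithm is $z = \sqrt{m}\,\bar u / \hat s$, where $\bar u$ and $\hat s^2$ are the sample mean and sample variance of $u$ over the test points.

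\textbf{Step 1 (conditional i.i.d. structure).} Under $H_0$ the $x_j$ are i.i.d.\ from $p$ and the split is independent of the data. Hence, conditionally on $\mathcal{D}_{\mathrm{tr}}$, the values $\{u(x) : x \in \mathcal{D}_{\mathrm{te}}\}$ are i.i.d., since $u$ is a deterministic function of $\mathcal{D}_{\mathrm{tr}}$ only.

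\textbf{Step 2 (mean zero).} By the Stein identity under the standing tail regularity assumption (the same fact behind the earlier remark that $\mathrm{PSD}_{\mathcal{L}_1}$ vanishes when $q = p$), $\mathbb{E}_p[\mathcal{L}_1 h(X)] = 0$ for every polynomial $h$ of the form \eqref{eqn:poly}. In particular $\mathbb{E}_p[u(X) \mid \mathcal{D}_{\mathrm{tr}}] = 0$ almost surely.

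\textbf{Step 3 (finite variance).} The assumption $\mathbb{E}_p[(\mathcal{L}_1 h^\star(X))^2] < \infty$ a.s.\ gives $\sigma^2(\mathcal{D}_{\mathrm{tr}}) := \mathbb{V}_p[u(X) \mid \mathcal{D}_{\mathrm{tr}}] < \infty$.

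\textbf{Step 4 (Lindeberg--L\'evy CLT and studentisation).} Conditionally on $\mathcal{D}_{\mathrm{tr}}$ and assuming $\sigma^2 > 0$, the Lindeberg--L\'evy CLT gives $\sqrt{m}\,\bar u/\sigma \xrightarrow{d} \mathcal{N}(0,1)$. The weak law of large numbers gives $\hat s^2 \to \sigma^2$ in probability. Slutsky's lemma then yields $z \xrightarrow{d} \mathcal{N}(0,1)$ conditionally.

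\textbf{Step 5 (removing the conditioning).} Let $\Phi$ denote the standard normal distribution function. For each $t$, $P(z \le t \mid \mathcal{D}_{\mathrm{tr}}) \to \Phi(t)$ almost surely, and these conditional probabilities are bounded by $1$. Dominated convergence therefore gives $P(z \le t) \to \Phi(t)$ unconditionally. If $|\mathcal{D}_{\mathrm{tr}}|$ also varies with $n$, the same argument works along the sequence, because the conditional CLT involves only the test sample for a fixed, measurable $h^\star$. The size statement follows at once: $P(|z| > z_{1-\alpha/2}) \to 2\bigl(1 - \Phi(z_{1-\alpha/2})\bigr) = \alpha$, since $\Phi$ is continuous.

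\textbf{Main obstacle: the degenerate case.} The delicate point is $\sigma^2(\mathcal{D}_{\mathrm{tr}}) = 0$, i.e.\ $u \equiv 0$ $p$-a.s. This can happen if $\hat\mu_i$ and $\hat\mu_b$ are all zero, or if $\mathcal{L}_1 h^\star$ vanishes identically. First I would show this event has probability zero: the estimated coefficients are continuous functions of the training data, which has a density, and the set of $(A, b)$ with $\mathcal{L}_1 h \equiv 0$ is a proper linear subspace (for a Gaussian-like $p$ only $h = 0$ qualifies). If that argument fails, I would instead adopt the convention that the test does not reject when $\hat s = 0$, which keeps the size at most $\alpha$. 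I would also make sure the normalisation by $Z$ is harmless: it rescales $u$ and therefore leaves $z$ unchanged.
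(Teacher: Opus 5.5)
Your proposal is correct and follows the paper's route. The paper's whole proof is \emph{a standard CLT argument conditionally on $\mathcal{D}_{\mathrm{tr}}$ together with Stein's identity under $H_0$}, and you make explicit the studentisation (WLLN and Slutsky) and the removal of the conditioning (dominated convergence) that the paper leaves implicit.

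Two side remarks need repair, although neither affects the main argument.

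First, in the degenerate case it is false for $d\ge 2$ that only $h=0$ gives $\mathcal{L}_1 h\equiv 0$ under a Gaussian target. For $p=\mathcal{N}(0,I_d)$ and $h(x)=Ax$ with $A$ skew-symmetric, $\mathcal{L}_1 h(x)=\mathrm{tr}(A)-x^\top A x=0$. Also, continuity of the estimator in data with a density does not by itself make the event null. A clean argument goes as follows. The stacked coefficient vector satisfies $\hat\theta\propto\hat\mu=\frac{1}{|\mathcal{D}_{\mathrm{tr}}|}\sum_{x_j\in\mathcal{D}_{\mathrm{tr}}}\xi(x_j)$, so it lies in $\mathrm{span}\{\xi(x)\}$. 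If $\hat\theta^\top\xi\equiv 0$ ($p$-a.s., hence identically by continuity and full support of $p$), then evaluating at the training points gives $\hat\theta^\top\hat\mu=0$, i.e.\ $\hat\mu=0$. This forces the training-sample mean to equal $\mu_p$, which has probability zero.

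Second, your claim that a growing $|\mathcal{D}_{\mathrm{tr}}|$ is handled \emph{by the same argument} is not justified as written. In that case $h^\star$ changes with $n$, so the Lindeberg--L\'evy CLT for a fixed law does not apply, and you need a uniform version (e.g.\ a Berry--Esseen bound). Such a version works here because $\hat\theta$ ranges over a compact set, the unit sphere of $\mathrm{span}\{\xi(x)\}$, on which $\mathbb{V}_p[\theta^\top\xi(X)]$ is bounded away from zero.
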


\begin{proof}
The result follows a standard CLT argument, conditionally on $\mathcal{D}_{\mathrm{tr}}$, together with Stein's identity under $H_0$.
\end{proof}

Crucially, this $z$-statistic inherits the same exponential SNR decay as the $U$-statistic.

\begin{remark}
\label{thm:z_snr}
Let $z_Q$ denote the split-sample $z$-statistic constructed using the theoretically optimal coefficients $\{A_i^\star,b^\star\}$ and polynomial degree $Q$. Under the settings of Remark \ref{prop:asymptoticsnrboot} and the moment-dominance argument, the $\text{SNR}^2$ of $z_Q$ deteriorates as $Q \to \infty$. In particular,
\begin{align}
\mathrm{SNR}(Q)^2
\asymp
\frac{\big((Q+1)!!\big)^2}{(2Q+1)!!},
\end{align}
which decays exponentially in $Q$.
\end{remark}

\begin{proof}
    This follows the same arguments as the sketch of proof of Remark \ref{prop:asymptoticsnrboot}.
\end{proof}
This confirms that the degradation of $\text{SNR}^2$ is a fundamental property of the polynomial Stein basis that persists even under computationally efficient split-sample testing.

\paragraph{Practical Considerations for Dependent Sampling}
Under dependent sampling, the test statistic is not necessarily asymptotically standard Gaussian under the CLT, as the variance is typically misspecified and requires correction (e.g., via MCMC covariance estimators; see \citealt{brooks2011handbook}). 

Nevertheless, in our simulation studies we include examples with dependent sampling, and observe that methods derived under the i.i.d.\ assumption can still perform reasonably well in this setting, despite the aforementioned limitations.

\section{$\text{SNR}^2$-Optimised $\mathrm{PSD}_{\mathcal{L}_1}$}
\begin{definition}[$\text{SNR}^2$-optimised $\mathrm{PSD}_{\mathcal{L}_1}$]
The $\text{SNR}^2$-optimised polynomial Stein discrepancy is defined as
\begin{equation}
\label{def:SNROptimised}
\mathrm{o}\text{-}\mathrm{PSD}_{\mathcal{L}_1}(q \,\|\, p)
:=
\sup_{h}
\frac{
\mathbb{E}^2_{q}[\mathcal{L}_1 h(x)]
}{
\mathbb{V}_{q}[\mathcal{L}_1 h(x)]
}.
\end{equation}
Since the objective is scale-invariant, no norm constraint is required.
\end{definition}

Let $\theta \in \mathbb{R}^{Qd^2+d}$ denote the stacked parameters
$\{A_1,\ldots,A_Q,b\}$ and let $\xi(x)$ be defined in Section~\ref{section:gof_psd}, so that
$\mathcal{L}_1 h(x)=\theta^\top \xi(x)$.
Define
\[
\mu=\mathbb{E}_q[\xi(x)],
\qquad
\Sigma=\mathbb{COV}_q[\xi(x)].
\]
Then the objective reduces to the generalised Rayleigh quotient
\begin{equation}
\label{eqn:rayleigh}
J(\theta)
=
\frac{
(\theta^\top \mu)^2
}{
\theta^\top \Sigma \theta
} = \frac{\theta^\top (\mu\mu^\top)\theta}{\theta^\top \Sigma \theta}.
\end{equation}

\begin{proposition}
Given positive-definite $\Sigma$, the objective $J(\boldsymbol{\theta})$ is maximised by $\boldsymbol{\theta}^\star \propto \boldsymbol{\Sigma}^{-1} \boldsymbol{\mu}$. The resulting squared discrepancy $\mathrm{o}\text{-}\mathrm{PSD}_{\mathcal{L}_1}^2(q \,\|\, p)$ is $ \boldsymbol{\mu}^\top \boldsymbol{\Sigma}^{-1} \boldsymbol{\mu}$.
\end{proposition}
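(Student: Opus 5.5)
The plan is to bound $J(\theta)$ above by a quantity independent of $\theta$ using the Cauchy--Schwarz inequality in the inner product induced by $\Sigma$, and then exhibit a $\theta$ that attains the bound. Since $\Sigma$ is positive definite, it has a symmetric positive-definite square root $\Sigma^{1/2}$ with inverse $\Sigma^{-1/2}$. For any $\theta \neq 0$ we have $\theta^\top\Sigma\theta>0$, so $J$ is well defined. We write
\[
\theta^\top \mu = (\Sigma^{1/2}\theta)^\top(\Sigma^{-1/2}\mu).
\]
Cauchy--Schwarz in $\mathbb{R}^{Qd^2+d}$ then gives
\[
(\theta^\top\mu)^2 \le (\theta^\top\Sigma\theta)\,(\mu^\top\Sigma^{-1}\mu).
\]
Dividing by $\theta^\top\Sigma\theta$ yields $J(\theta)\le \mu^\top\Sigma^{-1}\mu$ for every admissible $\theta$.

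Next we check attainment. Equality in Cauchy--Schwarz holds exactly when $\Sigma^{1/2}\theta$ is parallel to $\Sigma^{-1/2}\mu$, i.e.\ when $\theta = c\,\Sigma^{-1}\mu$ for some $c\neq 0$. Substituting,
\[
J(c\,\Sigma^{-1}\mu)=\frac{c^2(\mu^\top\Sigma^{-1}\mu)^2}{c^2\,\mu^\top\Sigma^{-1}\Sigma\Sigma^{-1}\mu}=\mu^\top\Sigma^{-1}\mu.
\]
So the supremum in \eqref{def:SNROptimised}, which by the reduction to \eqref{eqn:rayleigh} equals $\sup_{\theta\neq0}J(\theta)$, is attained at $\theta^\star\propto\Sigma^{-1}\mu$. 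Its value is $\mu^\top\Sigma^{-1}\mu$. By scale invariance, every nonzero multiple of $\theta^\star$ is a maximiser, which is why the statement is phrased with $\propto$.

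Two degenerate points need a brief comment. If $\mu=0$ (e.g.\ under $q=p$), then $J\equiv 0$, every $\theta$ is a maximiser, and the formula still correctly returns $0$. Alternatively, the problem can be read as a generalised eigenproblem $\mu\mu^\top\theta=\lambda\Sigma\theta$. Here the rank-one matrix $\Sigma^{-1/2}\mu\mu^\top\Sigma^{-1/2}$ has a single nonzero eigenvalue $\mu^\top\Sigma^{-1}\mu$, which provides a cross-check. The only point needing care is the reduction step itself, namely that $\mathcal{L}_1h=\theta^\top\xi$ makes the supremum over $h$ in the polynomial class equal to the supremum over $\theta$. This follows from the linear parametrisation established before \eqref{eqn:rayleigh}, so I expect no real obstacle; the argument is essentially the standard Fisher discriminant computation.
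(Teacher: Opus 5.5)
Your proof is correct and follows the same route as the paper. Both apply Cauchy--Schwarz in the $\Sigma$-inner product: the paper writes $\theta^\top\mu=\theta^\top\Sigma\Sigma^{-1}\mu$ and you factor through $\Sigma^{1/2}\theta$ and $\Sigma^{-1/2}\mu$, with equality iff $\theta\propto\Sigma^{-1}\mu$. Your explicit attainment check and the $\mu=0$ remark are minor additions that the paper leaves implicit.
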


\begin{proof}
    Proof is given in Appendix \ref{Appendix:SNRproof}.
\end{proof}

While theoretically optimal, this formulation presents significant challenges. The feature vector dimension is $\mathcal{O}(Qd^2+d)$. In high-dimensional regimes, the empirical covariance $\widehat{\boldsymbol{\Sigma}}$ is rank-deficient, rendering the unregularised estimator non-identifiable. Furthermore, the $\mathcal{O}(Q^3d^6)$ inversion cost is often prohibitive. Thus, without structural regularisation, the naive $\text{SNR}^2$-optimised test is both statistically ill-posed and computationally infeasible. In this work, we do not attempt to compute the estimator for $\boldsymbol{\theta}$ that directly optimises the $\text{SNR}^2$. 

\subsection{Scalable Approximate $\text{SNR}$-Optimised $\text{PSD}_{\mathcal{L}_1}$}

To obtain a tractable approximation, we restrict attention to the span of the unweighted $\mathrm{PSD}_{\mathcal{L}_1}$ solution and introduce scalar weights to reweight each Stein feature.

\begin{proposition}[$\lambda$-PSD$_{\mathcal{L}_1}$]
\label{prop:lambda_psd}
Let $\{A_1^\star,\dots,A_Q^\star,b^\star\}$ denote the theoretical maximiser of
$\mathrm{PSD}_{\mathcal{L}_1}$, and define
\[
\Psi(X):=
\begin{pmatrix}
\langle A_1^\star,\Phi_1(X)\rangle_F \\
\vdots \\
\langle A_Q^\star,\Phi_Q(X)\rangle_F \\
(b^\star)^\top s_p(X)
\end{pmatrix}
\in\mathbb{R}^{Q+1}.
\]
Further define
$\mu_\Psi=\mathbb{E}_q[\Psi(X)]$ and $
\Sigma_\Psi=\mathbb{COV}_q(\Psi(X)).
$

The approximate $\text{SNR}^2$-optimised polynomial Stein discrepancy ($\lambda$-PSD$_{\mathcal{L}_1}$) is
\[
\lambda\mbox{-}\mathrm{PSD}_{\mathcal{L}_1}(q\,\|\,p)
:=
\sup_{\lambda\in\mathbb{R}^{Q+1}}
\frac{\lambda^\top \mu_\Psi}
{\sqrt{\lambda^\top\Sigma_\Psi\lambda}},
\]
whose maximiser satisfies
$
\lambda^\star\propto\Sigma_\Psi^{-1}\mu_\Psi$.
\end{proposition}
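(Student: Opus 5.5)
The claim is that the ratio $\lambda^\top\mu_\Psi/\sqrt{\lambda^\top\Sigma_\Psi\lambda}$ is maximised at $\lambda^\star\propto\Sigma_\Psi^{-1}\mu_\Psi$. I will prove this by a Cauchy--Schwarz argument in the inner product induced by $\Sigma_\Psi$, mirroring the earlier proposition for the full Rayleigh quotient $J(\theta)$ in \eqref{eqn:rayleigh}. As there, I assume $\Sigma_\Psi$ is positive definite. This is needed for $\Sigma_\Psi^{-1}$ to exist, and it also ensures the denominator is nonzero for every $\lambda\neq 0$. I also assume $\mu_\Psi\neq 0$, i.e.\ $q\neq p$ in the sense detected by $\mathrm{PSD}_{\mathcal{L}_1}$; otherwise the objective is identically zero and every $\lambda$ is a maximiser. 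Finiteness of $\Sigma_\Psi$ follows from $\mathbb{E}_q\|\xi(X)\|_2^2<\infty$ (Theorem~\ref{thm:snr_scaling}), since each $\Psi_k$ is a fixed linear functional of $\xi$.

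\textbf{Reduction to a Rayleigh quotient.} Since $\Psi(X)$ is a linear image of $\xi(X)$, the problem has exactly the structure of \eqref{eqn:rayleigh} with $(\xi,\theta,\mu,\Sigma)$ replaced by $(\Psi,\lambda,\mu_\Psi,\Sigma_\Psi)$. The objective is invariant under $\lambda\mapsto c\lambda$ for $c>0$, and changes sign for $c<0$. Hence maximising it is equivalent to maximising $(\lambda^\top\mu_\Psi)^2/(\lambda^\top\Sigma_\Psi\lambda)$ over directions and then choosing the sign so that $\lambda^\top\mu_\Psi>0$.

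\textbf{Key step.} Let $\Sigma_\Psi^{1/2}$ be the symmetric square root and set $v=\Sigma_\Psi^{1/2}\lambda$. Then
\[
\lambda^\top\mu_\Psi = v^\top\Sigma_\Psi^{-1/2}\mu_\Psi \le \|v\|_2\,\|\Sigma_\Psi^{-1/2}\mu_\Psi\|_2 = \sqrt{\lambda^\top\Sigma_\Psi\lambda}\,\sqrt{\mu_\Psi^\top\Sigma_\Psi^{-1}\mu_\Psi}.
\]
Equality holds if and only if $v=c\,\Sigma_\Psi^{-1/2}\mu_\Psi$ with $c>0$, which is equivalent to $\lambda = c\,\Sigma_\Psi^{-1}\mu_\Psi$. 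Dividing by $\sqrt{\lambda^\top\Sigma_\Psi\lambda}$ shows that the supremum equals $\sqrt{\mu_\Psi^\top\Sigma_\Psi^{-1}\mu_\Psi}$ and is attained exactly on the ray $\lambda^\star\propto\Sigma_\Psi^{-1}\mu_\Psi$. As a cross-check, I will also verify stationarity by differentiating $\log$ of the objective: this gives $\mu_\Psi/(\lambda^\top\mu_\Psi)=\Sigma_\Psi\lambda/(\lambda^\top\Sigma_\Psi\lambda)$, whose solutions are again $\lambda\propto\Sigma_\Psi^{-1}\mu_\Psi$.

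\textbf{Main obstacle.} The only step needing care is the non-degeneracy assumption. $\Sigma_\Psi$ can be singular if two coordinates of $\Psi$ are almost surely affinely dependent under $q$, for instance if some $A_i^\star=0$ so that the corresponding coordinate is identically zero. I will handle this by restricting to the coordinates with nonzero $\mu_{\Psi}$ entries, or by replacing $\Sigma_\Psi^{-1}$ with a pseudo-inverse on $\mathrm{range}(\Sigma_\Psi)$. If $\mu_\Psi$ has a component outside that range, then the direction orthogonal to the range has zero variance and nonzero mean, so the supremum is $+\infty$. I will state this as the degenerate case and otherwise assume $\Sigma_\Psi\succ0$, as the statement implicitly does.
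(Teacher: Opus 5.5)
Your argument is correct and is essentially the paper's. The paper gives no separate proof for this proposition; it relies implicitly on the same Cauchy--Schwarz bound in the $\Sigma$-norm that it uses to maximise the Rayleigh quotient $J(\theta)$, transplanted to $(\Psi,\lambda,\mu_\Psi,\Sigma_\Psi)$.

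Your extra care is a sound refinement of what the paper leaves implicit. Because the objective here is not squared, the maximiser is only the positive ray $\lambda=c\,\Sigma_\Psi^{-1}\mu_\Psi$ with $c>0$. Also, your reduction to coordinates with nonzero $\mu_\Psi$ entries is exact: $(\mu_\Psi)_i=\|\mu_i\|_F^2/Z$ (and $\|\mu_b\|_2^2/Z$ for the last entry), which vanishes precisely when that coordinate of $\Psi$ is identically zero.
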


Given optimal coefficients ${A_1^\star,\dots,A_Q^\star,b^\star}$, potentially estimated from a sample of size $n_1$, the optimal weight vector $\lambda^\star$ can be estimated in $\mathcal{O}(Q n_2 d^2 + Q^3)$ time using an additional sample of size $n_2$.

\begin{theorem}[SNR Stability of $\lambda\mbox{-}\mathrm{PSD}_{\mathcal{L}_1}$] \label{thm:SNR_weighted_PSD_fisher} Let $p = \mathcal{N}(0, I_d)$ and $q = \mathcal{N}(\delta \mathbf{1}, \sigma^2 I_d)$ with $\|\delta\|_2 > 0$ and $\sigma^2 > 0.5$. Let $\Psi(x) = (\psi_1(x), \dots, \psi_{Q+1}(x))^\top$ be the vector of Stein features spanning the subspace $\mathcal{V}_Q = \mathrm{span}\{\Psi(x)\}$ ($Q \geq 1)$. The $\mathrm{SNR}(Q)^2$ of the optimally weighted statistic $u(x) = \lambda^{\star\top}\Psi(x)$ satisfies: \begin{equation} 0 < \mathrm{SNR}(Q)^2 \leq \chi^2(p\|q). \end{equation} 
\end{theorem}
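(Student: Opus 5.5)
The plan is to read $\mathrm{SNR}(Q)^2$ here as the per-observation quantity $\mathbb{E}_q[u(X)]^2/\mathbb{V}_q[u(X)]$ for $u=\lambda^{\star\top}\Psi$. This equals the optimal value $\mu_\Psi^\top\Sigma_\Psi^{-1}\mu_\Psi$ of the Rayleigh quotient in Proposition~\ref{prop:lambda_psd}, or the supremum over $\lambda$ if $\Sigma_\Psi$ is singular; the $z$-statistic only rescales it by the test-set size. I will prove the two inequalities separately. The upper bound will hold for \emph{every} $u$ in $\mathcal{V}_Q$, so it holds for the supremum. The lower bound only needs one explicit $\lambda$ with a positive ratio.

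\textbf{Upper bound.} The key input is Stein's identity under $p$. Every $u\in\mathcal{V}_Q$ has the form $\mathcal{L}_1 h$ for a polynomial $h$ of the form \eqref{eqn:poly}. Since $p=\mathcal{N}(0,I_d)$ has Gaussian tails, integration by parts gives $\mathbb{E}_p[u(X)]=0$ (this is the regularity assumed throughout). Then
\[
\mathbb{E}_q[u]=\mathbb{E}_q[u]-\mathbb{E}_p[u]=\mathbb{E}_q\Bigl[u(X)\Bigl(1-\tfrac{p(X)}{q(X)}\Bigr)\Bigr]=\mathbb{E}_q\Bigl[\bigl(u(X)-\mathbb{E}_q u\bigr)\Bigl(1-\tfrac{p(X)}{q(X)}\Bigr)\Bigr],
\]
where the last equality uses $\mathbb{E}_q[1-p/q]=0$. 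Cauchy--Schwarz in $L^2(q)$ then gives
\[
\mathbb{E}_q[u]^2\le \mathbb{V}_q[u]\;\mathbb{E}_q\bigl[(1-p/q)^2\bigr]=\mathbb{V}_q[u]\,\chi^2(p\|q).
\]
Dividing by $\mathbb{V}_q[u]>0$ bounds the ratio for every $\lambda$, hence also for $\lambda^\star$.

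\textbf{Finiteness.} The step needing care is checking that everything is finite, and this is where $\sigma^2>1/2$ enters. We need $\chi^2(p\|q)=\int p^2/q-1<\infty$. Up to constants,
\[
p^2/q\propto\exp\bigl(-\|x\|^2+\|x-\delta\mathbf{1}\|^2/(2\sigma^2)\bigr),
\]
whose quadratic coefficient is $-(1-\tfrac{1}{2\sigma^2})$. This is negative exactly when $\sigma^2>1/2$, so the integral is a finite Gaussian integral with an explicit closed form if desired. We also need $u\in L^2(q)$, which is immediate because $u$ is a polynomial and $q$ is Gaussian. With both in hand, the Cauchy--Schwarz step is legitimate.

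\textbf{Lower bound.} Take $\lambda=e_{Q+1}$, so that $u=b^{\star\top}s_p=-b^{\star\top}x$. Here $\mu_b=\mathbb{E}_q[-X]=-\delta\mathbf{1}\neq0$ and $b^\star=\mu_b/Z$. Hence $\mathbb{E}_q[u]=\|\mu_b\|_2^2/Z>0$, while $\mathbb{V}_q[u]=\sigma^2\|\mu_b\|_2^2/Z^2\in(0,\infty)$. The ratio is therefore $d\delta^2/\sigma^2>0$, and since $\lambda^\star$ attains the supremum, $\mathrm{SNR}(Q)^2\ge d\delta^2/\sigma^2>0$. The same argument also shows $\mu_\Psi\neq0$, because its last entry is $\|\mu_b\|_2^2/Z$ and its other entries $\|\mu_i\|_F^2/Z$ are nonnegative. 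So the bound in the theorem is a $Q$-independent sandwich $d\delta^2/\sigma^2\le \mathrm{SNR}(Q)^2\le\chi^2(p\|q)$, which is the claimed stability. If $\Sigma_\Psi$ is singular, I will state the result in terms of the supremum, or equivalently use the pseudo-inverse on $\mathrm{range}(\Sigma_\Psi)$; the two bounds are unaffected.
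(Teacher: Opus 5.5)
Your proposal is correct. The upper bound is the paper's argument. Stein's identity under $p$ turns $\mathbb{E}_q[u]$ into the $q$-covariance of $u$ with the residual $R = 1 - p/q$, and Cauchy--Schwarz then gives $\mathbb{E}_q[u]^2 \le \mathbb{V}_q[u]\,\chi^2(p\|q)$. Your Gaussian computation showing $\int p^2/q<\infty$ exactly when $\sigma^2>1/2$ fills in what the paper only asserts. It also makes clear that this hypothesis serves only to keep the bound finite.

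For the lower bound you follow the same strategy as the paper (exhibit one element of $\mathcal{V}_Q$ with positive ratio), but with a different witness.
\begin{itemize}
\item The paper uses the full $Q=1$ PSD witness $\mathcal{L}_1 h^\star$ with $h^\star(x) = A_1^\star x + b^\star$. It expands this as a quadratic in $x$ and shows its mean, proportional to $d(\sigma^2-1)^2 + d\delta^2(2\sigma^2-1) + d^2\delta^4$, is positive. It never computes the variance, so it only concludes $\mathrm{SNR}(Q)^2>0$ and defers an explicit lower bound to future work.
\item You take $\lambda = e_{Q+1}$, the linear score feature alone, whose mean and variance are one-line computations. This yields the explicit $Q$-independent bound $\mathrm{SNR}(Q)^2 \ge d\delta^2/\sigma^2$.
\end{itemize}
Your choice gives a two-sided, $Q$-free sandwich, which is a sharper form of the stability claim. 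Membership in $\mathcal{V}_Q$ is also immediate, and you do not even need $Q\ge 1$. What the paper's quadratic witness buys instead is sensitivity to variance mismatch: its mean stays positive when $\delta=0$ and $\sigma^2\neq 1$, whereas yours vanishes there. That case is excluded by the hypotheses, so this does not affect your proof.

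Your handling of singular $\Sigma_\Psi$ is also sound. Cauchy--Schwarz forces $\lambda^\top\mu_\Psi = 0$ whenever $\lambda^\top\Sigma_\Psi\lambda = 0$. Hence $\mu_\Psi \in \mathrm{range}(\Sigma_\Psi)$, and the pseudo-inverse value coincides with the supremum.
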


\begin{proof}
Proof is provided in Appendix~\ref{Appendix:SNRproof}. 
\end{proof}

\subsection{Optimisation of $\lambda$}

In practice, we replace the theoretically optimal coefficients
$\{A_i^\star,b^\star\}$ of $\mathrm{PSD}_{\mathcal{L}_1}$ with empirical estimates
$\{\hat A_i^\star,\hat b^\star\}$. Although the optimal 
$\lambda^\star$ admits the closed-form estimator
$\widehat{\Sigma}_\Psi^{-1}\widehat{\mu}_\Psi$, this approach is often numerically unstable. As $Q$ increases, the empirical covariance matrix $\widehat{\Sigma}_\Psi$ associated with higher-order Stein features may become ill-conditioned, particularly in small-sample regimes. Regularisation is therefore highly desirable. 

In this work, we adopt a simple and computationally efficient stochastic optimisation strategy based on the cross-entropy (CE) method \citep{rubinstein1997optimization,rubinstein2004cross}.  This framework adaptively learns a sampling distribution over \( \lambda \) that concentrates on high-SNR directions. Using a sparse parameterisation, such as a Bernoulli-Gaussian prior (see Algorithm~\ref{alg:bg_cem}), the CE method induces implicit regularisation by pruning components that contribute little to the signal. This approach scales favourably with \( Q \) and improves interpretability by highlighting which moment orders drive the discrepancy.

\subsection{Goodness-of-Fit Testing for $\lambda\mbox{-}\mathrm{PSD}_{\mathcal{L}_1}$}

Given samples $\{x_j\}_{j=1}^n \sim q$, we split the data into training and test sets, where the training set is further divided into two subsets. The first training subset is used to estimate the Stein coefficients $\{A_i^\star\}_{i=1}^Q$ and $b^\star$, while the second is used to estimate the optimal reweighting parameters $\lambda^\star$ given $\{A_i^\star\}_{i=1}^Q$ and $b^\star$. The resulting statistic
$
u(x)
=
\sum_{i=1}^{Q}
\lambda_i^\star
\langle A_i^\star, \Phi_i(x) \rangle_F
+
\lambda_{Q+1}^\star b^{\star\top} s_p(x)
$
is then evaluated on the test set, followed by a one-sample $z$-test; full implementation details are provided in Algorithm~\ref{alg:psd_train_test} in Appendix~\ref{section:algorithms}. Under similar mild conditions, the test statistic is asymptotically Gaussian.

\subsection{Extensions}
The scalar reweighting framework also admits a natural multi-resolution extension. In the most granular setting, one may introduce separate weights for every coefficient in $\{A_i^\star\}_{i=1}^Q$ and $b^\star$, yielding a total of $Qd^2+d$ parameters (equivalent to $\text{o-PSD}_{\mathcal{L}_1}$). More structured alternatives are also possible, such as assigning weights per moment order and dimension, resulting in $Qd+d$ parameters, or employing a hierarchical factorisation with separate weights for moment order and coordinate, requiring only $(Q+1)+d$ parameters. All such formulations can still be written as Rayleigh quotient problems over expanded Stein feature spaces, and Theorem~\ref{thm:SNR_weighted_PSD_fisher} can be extended accordingly. These structured parameterisations may be particularly effective when the discrepancy is concentrated in a small subset of dimensions. The resulting optimisation problems are closely related to sparse generalised eigenvalue literature (see \citealt{tan2018sparse, cai2021note}). We leave these extensions for future work.

\section{Experiments}
\label{section:exp}
This section benchmarks our proposed methods against several well-established Stein discrepancy methods from the literature. The compared methods include:

\begin{itemize}
    \item Gauss KSD: Standard quadratic-time KSD with a Gaussian kernel via the median tune heuristic  \citep{garreau2017large}.
     
    \item IMQ KSD: Standard quadratic-time KSD with the recommended inverse multiquadric (IMQ) kernel using hyperparameters $c = 1$ and $\beta = -0.5$ \citep{gorham2017measuring}.

   \item FSSD-opt: FSSD with a Gaussian kernel and optimised test locations \citep{jitkrittum2017linear}; here, $20\%$ of the samples are used to tune the kernel bandwidth.

   \item RFSD: RFSD with base IMQ kernel and 10 random features \citep{huggins2018random, srinivasan2024polynomial}.
\end{itemize}

Apart from PSD-based methods, the kernel-based methods use bootstrap calibration with $B=500$ simulated null test statistics. Significance level $\alpha=0.05$ is used throughout this section.
Code to reproduce these results is available at \href{https://github.com/olivervu25/weighted-psd}{\texttt{https://github.com/olivervu25/weighted-\\psd}}. This builds on existing implementations from \citealt{jitkrittum2017linear, huggins2018random, srinivasan2024polynomial}. All experiments were run on a single core of a high-performance computing workstation.

\subsection{Experiment 1: Standard Multivariate Gaussian Target}

We consider the target distribution $p$ as
\[
p(x)=\mathcal{N}(0_d,I_d),
\]
and consider the following sampling distributions:

\begin{itemize}
    \item Standard Gaussian: $q_1(x)=\mathcal{N}(0_d,I_d)$. 
    \item Perturbed Gaussian: $q_2(x)=\mathcal{N}(0_d,\Sigma_d)$, where
    \[
    \Sigma_d=\mathrm{Diag}(1.7,1,\ldots,1).
    \]
    \item Student-t: $q_3(x)$: independent Student's $t$ distributions with $5$ degrees of freedom.
\item Laplace: $q_4(x)$ independent $\mathrm{Laplace}(0,1/\sqrt{2})$ distributions; each marginal has the same first and second moments as a standard Gaussian.
\end{itemize}

In Experiment 1, we set \(n=1000\), except for the Student-\(t\) example where \(n=2000\), following \citet{huggins2018random, srinivasan2024polynomial}. We use 200 replicates with significance level \(\alpha=0.05\), and consider dimensions $
d \in \{2,4,8,16,32,64,128\}$. Results are shown in Figures~\ref{fig:popular-methods} and~\ref{fig:psd-vs-wpsd}. Figure~\ref{fig:psd-vs-wpsd} additionally compares \(\mathrm{PSD}_{\mathcal{L}_1}\) and \(\lambda\)-\(\mathrm{PSD}_{\mathcal{L}_1}\) through the estimated mean $\text{SNR}$ (MSNR), computed as the average absolute \(z\)-statistic.

\begin{figure*}[ht]
    \centering
    \includegraphics[width=\linewidth]{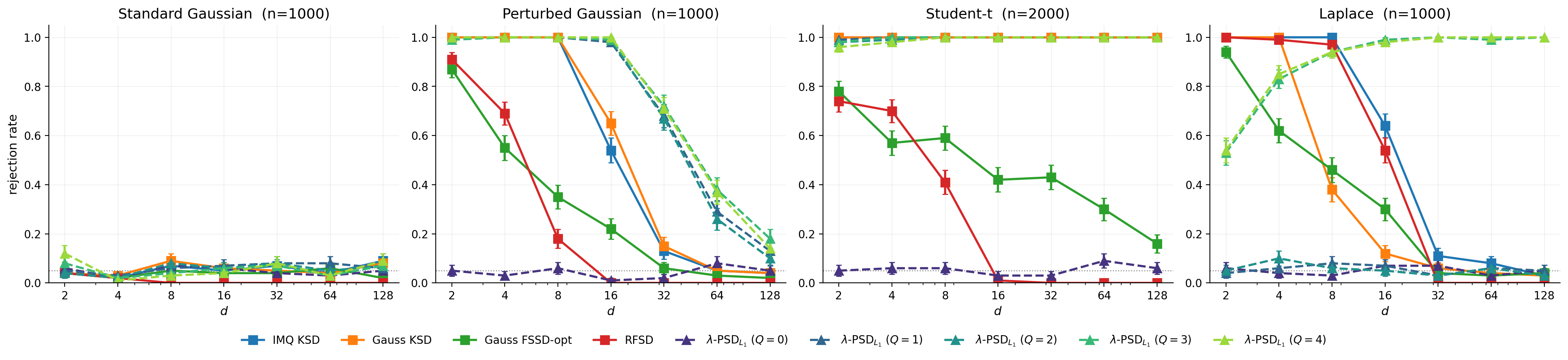}
    \caption{Experiment 1: Rejection rate vs.\ $d$. Dashed/triangles: $\lambda$-$\mathrm{PSD}_{\mathcal{L}_1}$; solid/squares: kernel baselines. Dotted line: $\alpha=0.05$.}
    \label{fig:popular-methods}
\end{figure*}

\begin{figure*}[ht]
    \centering
    \includegraphics[width=\linewidth]{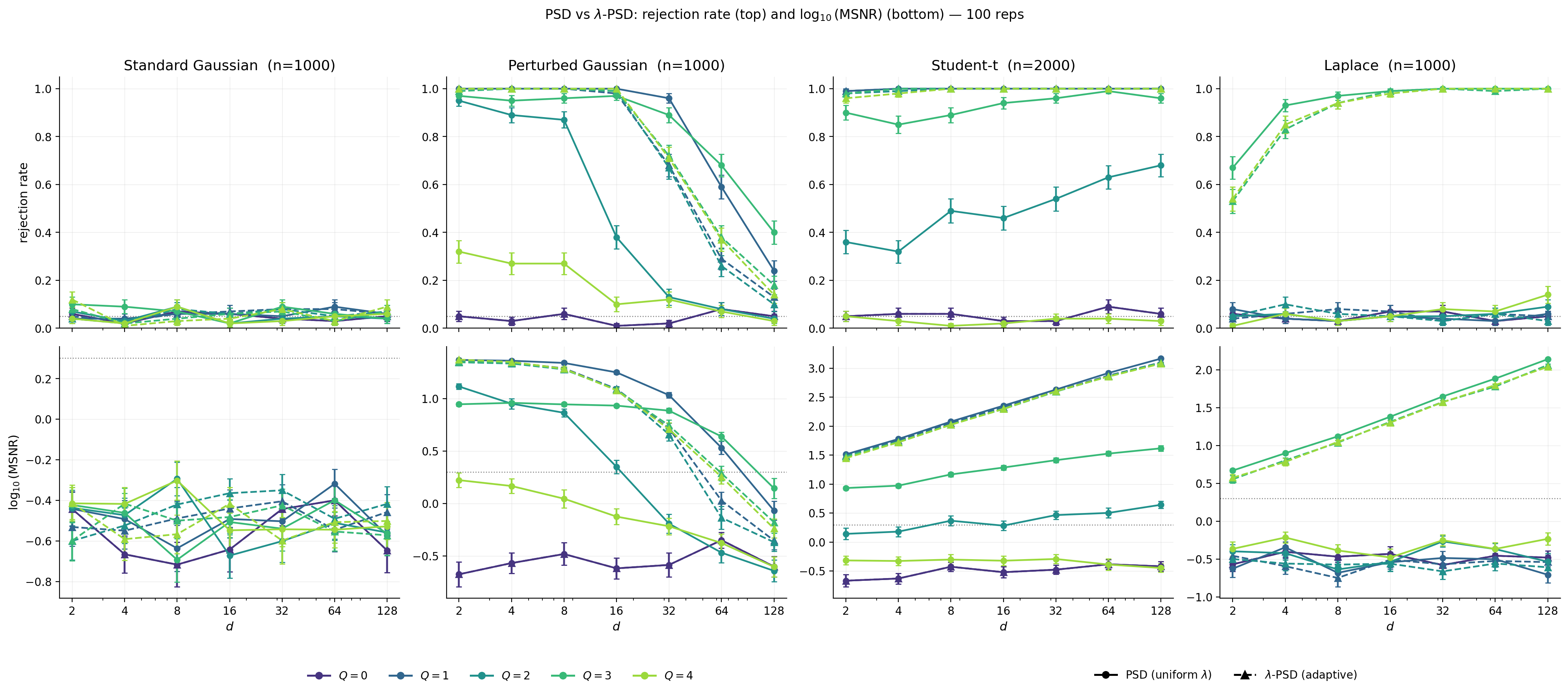}
    \caption{Experiment 1: Rejection rate (top) and $\log_{10}(\mathrm{MSNR})$ (bottom) vs.\ $d$. Solid/circles: $\mathrm{PSD}_{L_1}$; dashed/triangles: $\lambda$-$\mathrm{PSD}_{L_1}$. Colour encodes $Q$. Dotted line: $\log_{10}(1.96)$.}
    \label{fig:psd-vs-wpsd}
\end{figure*}

Figures~\ref{fig:popular-methods} and~\ref{fig:psd-vs-wpsd} show that the proposed $\lambda$-$\mathrm{PSD}_{\mathcal{L}_1}$ achieves strong power across all alternatives while maintaining correct type-I error under the Gaussian null, outperforming IMQ KSD, Gaussian KSD, RFSD, and FSSD-opt in moderate and high dimensions, especially for heavy-tailed targets. Compared with the unweighted $\mathrm{PSD}_{\mathcal{L}_1}$, the adaptive version is more stable across $Q$, avoiding the sensitivity to over-parameterisation seen in the unweighted statistic. For the Laplace example, as expected, low-order choices such as $Q\in\{0,1,2\}$ are insufficient to detect the mismatch, but once $Q$ exceeds the effective complexity of the alternative, both rejection power and $\log_{10}(\mathrm{MSNR})$ remain stably high across $Q$ and dimension $d$.  See Appendix~\ref{Appendix:additional} for additional results demonstrating the superior robustness of $\lambda$-$\mathrm{PSD}_{\mathcal{L}_1}$ with respect to increasing $Q$.

\subsection{Experiment 2: Gaussian-Bernoulli Restricted Boltzmann Machine}

In Experiment 2, following \citealt{liu2016kernelized,jitkrittum2017linear,srinivasan2024polynomial}, we consider a Gaussian-Bernoulli restricted Boltzmann machine (RBM) with non-normalised target density \(p\). The empirical samples  are generated from the same RBM after perturbing the weight matrix \(B \in \mathbb{R}^{50 \times 40}\) with independent Gaussian noise of standard deviation \(\sigma\). When \(\sigma = 0\), the null hypothesis \(H_0: p = q\) holds exactly, whereas for \(\sigma > 0\) the task is to detect deviations induced by the perturbation. We use \(n=1000\) samples generated via Gibbs sampling with 2000 warm-up iterations, and report null rejection rates across a range of perturbation levels based on 200 independent repetitions. The results are presented in Table \ref{tab:rbm_with_wpsd}.

Both $\mathrm{PSD}_{\mathcal{L}_1}$ and $\lambda\mbox{-}\mathrm{PSD}_{\mathcal{L}_1}$ generally outperform the linear-time methods RFSD and FSSD-opt, while achieving performance competitive with the quadratic-time KSD methods. Moreover, $\lambda\mbox{-}\mathrm{PSD}_{\mathcal{L}_1}$ is noticeably less sensitive to the choice of truncation level $Q$ than $\mathrm{PSD}_{\mathcal{L}_1}$, particularly at $\sigma^2=0.02$, where rejection rates remain consistently high across different values of $Q$. These results suggest that our proposed methods may also be effective in dependent sampling and non-Gaussian settings.  
\begin{table}[ht]
\centering
\caption{Null rejection rates at perturbation levels in the RBM example}
\label{tab:rbm_with_wpsd}
\begin{tabular}{l c c c c}
\toprule
Perturbation - $\sigma^2$ & 0.00 & 0.02 & 0.04 & 0.06 \\
\midrule
RFSD & 0.01 & 0.28 & 0.91 & 0.99 \\
FSSD-opt & 0.04 & 0.65 & 0.94 & 0.99 \\
IMQ KSD & 0.04 & 0.98 & 1.00 & 1.00 \\
Gauss KSD & 0.03 & 0.95 & 1.00 & 1.00 \\
\hline
$\text{PSD}_{\mathcal{L}_1}$ ($Q=0$) & 0.07 & 0.23 & 0.84 & 0.96 \\
$\text{PSD}_{\mathcal{L}_1}$ ($Q=1$) & 0.05 & 0.98 & 1.00 & 1.00 \\
$\text{PSD}_{\mathcal{L}_1}$ ($Q=2$) & 0.03 & 0.54 & 0.99 & 1.00 \\
$\text{PSD}_{\mathcal{L}_1}$ ($Q=3$) & 0.04 & 0.97 & 1.00 & 1.00 \\
$\text{PSD}_{\mathcal{L}_1}$ ($Q=4$) & 0.05 & 0.60 & 0.99 & 1.00 \\
\hline
$\lambda$-$\text{PSD}_{\mathcal{L}_1}$ ($Q=1$) & 0.05 & 0.89 & 1.00 & 1.00 \\
$\lambda$-$\text{PSD}_{\mathcal{L}_1}$ ($Q=2$) & 0.04 & 0.84 & 1.00 & 1.00 \\
$\lambda$-$\text{PSD}_{\mathcal{L}_1}$ ($Q=3$) & 0.03 & 0.86 & 1.00 & 1.00 \\
$\lambda$-$\text{PSD}_{\mathcal{L}_1}$ ($Q=4$) & 0.04 & 0.83 & 1.00 & 1.00 \\
\bottomrule
\end{tabular}
\end{table}

\subsection{Runtime Analysis}

We further investigate the computational scalability of sparse $\lambda\mbox{-}\mathrm{PSD}_{\mathcal{L}_1}$ under the standard Gaussian setting. In the first scenario, we fix the dimension at $d=16$ and vary the sample size over $n \in \{100,300,1000,3000,10000\}$. In the second scenario, we fix the sample size at $n=1000$ and vary the dimension over $d \in \{2,4,8,\dots,1024\}$. We consider polynomial orders $Q \in \{0,1,2,3,4\}$ and report average runtime (in seconds) over 100 independent replicates. Results are shown in Figures~\ref{fig:runtime1} and \ref{fig:runtime2}.

\begin{figure}[ht]
    \centering
    \includegraphics[width=0.9\linewidth]{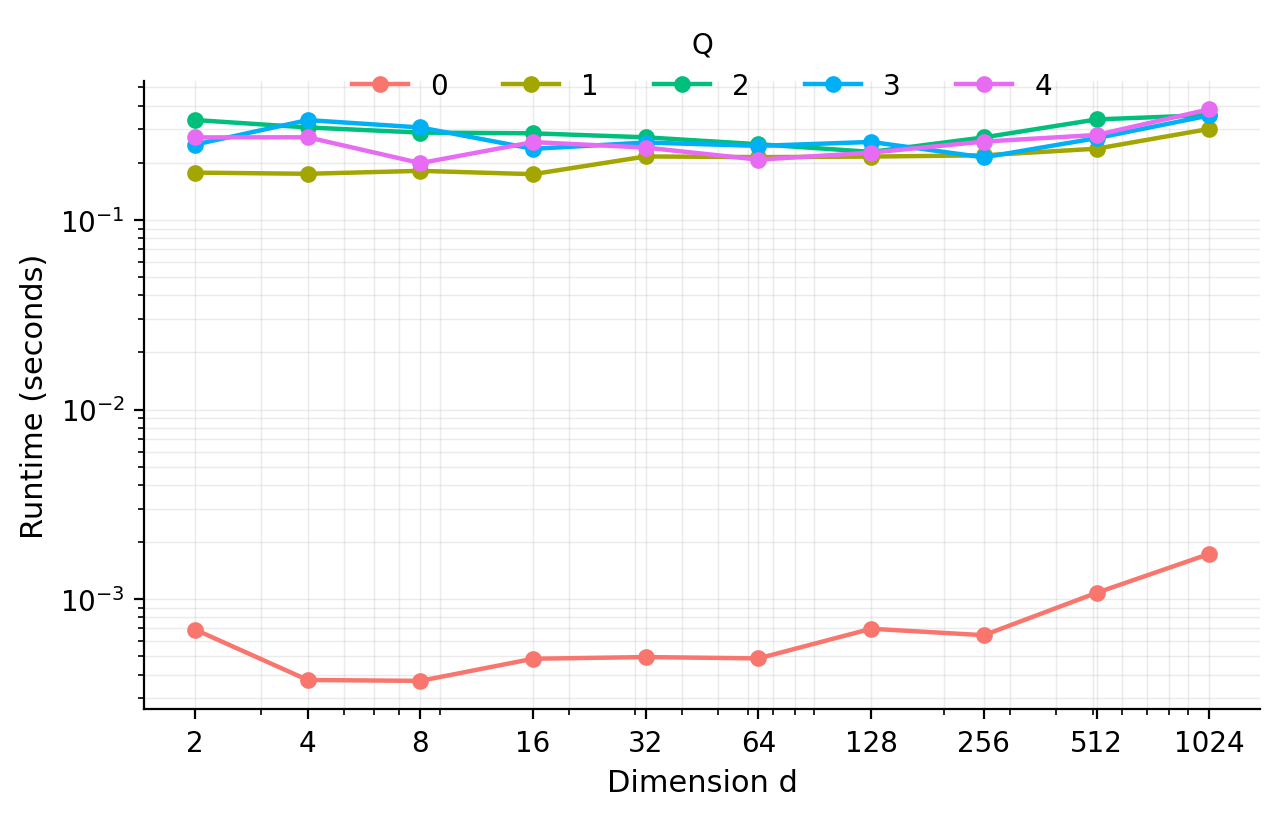}
    \caption{Average runtime of $\lambda\mbox{-}\mathrm{PSD}_{\mathcal{L}_1}$ as a function of dimension $d$ $(n=1000)$. Both axes are shown on logarithmic scales.}
    \label{fig:runtime1}
\end{figure}

\begin{figure}[ht]
    \centering
    \includegraphics[width=0.9\linewidth]{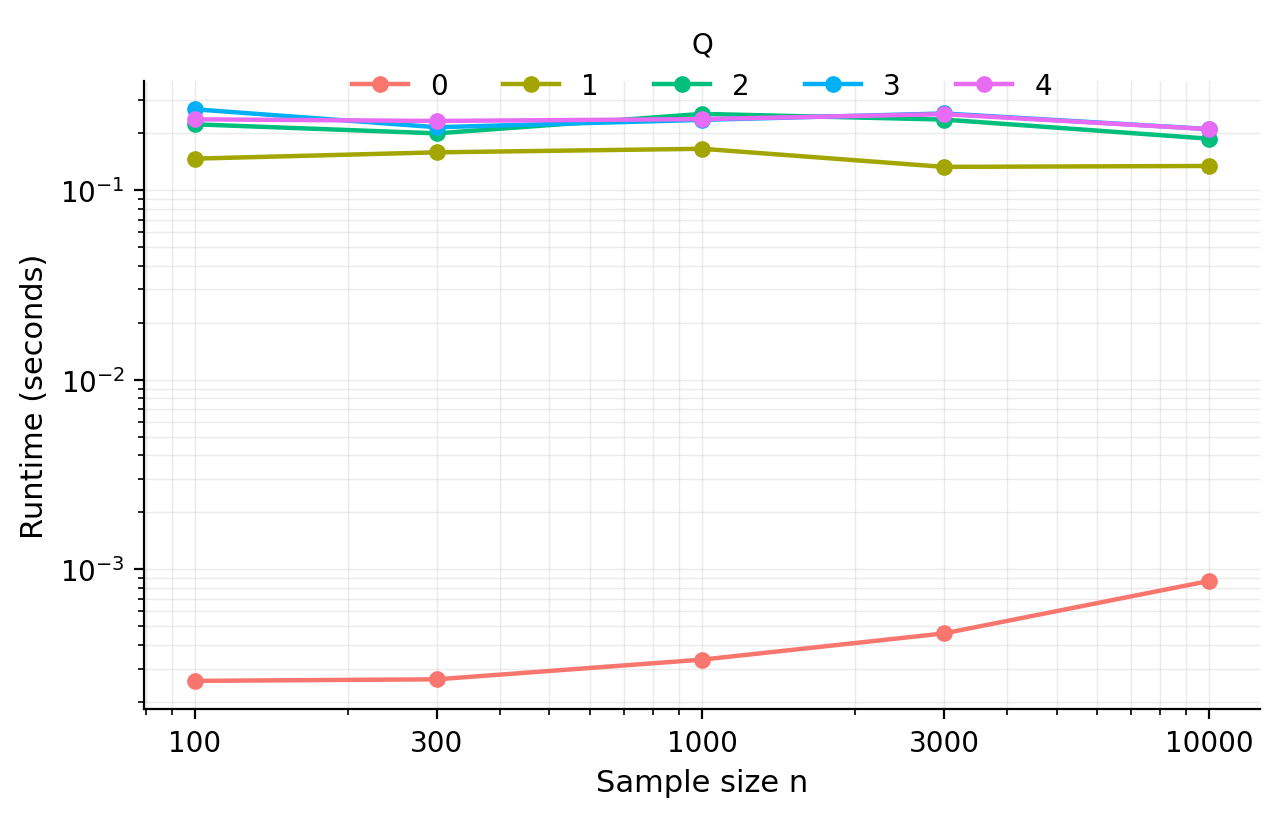}
    \caption{Average runtime of $\lambda\mbox{-}\mathrm{PSD}_{\mathcal{L}_1}$ as a function of sample size $n$ $(d=16)$.  Both axes are shown on logarithmic scales.}
    \label{fig:runtime2}
\end{figure}

As observed in Figures~\ref{fig:runtime1} and \ref{fig:runtime2}, $\lambda\mbox{-}\mathrm{PSD}_{\mathcal{L}_1}$ scales favourably with both $n$ and $d$. Despite the $\mathcal{O}(Qd^2n)$ complexity, the method remains practically fast even for $d=1024$ and $n=10^4$. 

\section{Conclusion}

We analyse the statistical behaviour of  $\mathrm{PSD}_{\mathcal{L}_1}$ from an SNR perspective. Our analysis revealed a previously unstudied failure mode of PSD: although increasing the polynomial order $Q$ enlarges the feature space and increases expressivity, the corresponding goodness-of-fit statistics can suffer severe variance inflation, leading to exponential $\text{SNR}^2$ decay.

Motivated by this observation, we reformulated Stein discrepancy construction as an explicit $\text{SNR}^2$ optimisation problem through a Rayleigh quotient formulation. This led to the proposed $\lambda$-$\mathrm{PSD}_{\mathcal{L}_1}$ framework, which introduces covariance-aware reweighting of Stein features within a low-dimensional subspace. Under Gaussian settings, we showed that $\lambda$-$\mathrm{PSD}_{\mathcal{L}_1}$ stabilises the $\text{SNR}^2$, avoiding the exponential collapse observed in standard PSD. The resulting procedure remains computationally tractable while improving robustness to the choice of polynomial degree. Empirical studies demonstrated that $\lambda$-$\mathrm{PSD}_{\mathcal{L}_1}$ achieves strong goodness-of-fit testing performance across a range of alternatives and dimensions, often outperforming standard kernel Stein methods while retaining linear-time complexity. Like competing approaches, the validity of the associated hypothesis test is asymptotic, and additional adjustments may be required under dependent sampling or small-sample settings.

Beyond the specific construction studied here, our results suggest a broader principle for scalable Stein discrepancy design: optimising the discrepancy alone is insufficient, and effective constructions should directly target the SNR. Future work includes extending the framework to richer reweighting schemes and dependent sampling settings.

\bibliography{example_paper}
\bibliographystyle{icml2026}

%%%%%%%%%%%%%%%%%%%%%%%%%%%%%%%%%%%%%%%%%%%%%%%%%%%%%%%%%%%%%%%%%%%%%%%%%%%%%%%
%%%%%%%%%%%%%%%%%%%%%%%%%%%%%%%%%%%%%%%%%%%%%%%%%%%%%%%%%%%%%%%%%%%%%%%%%%%%%%%
% APPENDIX
%%%%%%%%%%%%%%%%%%%%%%%%%%%%%%%%%%%%%%%%%%%%%%%%%%%%%%%%%%%%%%%%%%%%%%%%%%%%%%%
%%%%%%%%%%%%%%%%%%%%%%%%%%%%%%%%%%%%%%%%%%%%%%%%%%%%%%%%%%%%%%%%%%%%%%%%%%%%%%%
\newpage
\appendix
\onecolumn

\section{Proofs of Moment-Convergence Properties under the BvM Limit}
\label{Appendix:moment}

\textbf{Theorem 3.3. }Let \(p=\mathcal{N}(\mu_p,\Sigma_p)\) with positive-definite $\Sigma_p$, so
\(s_p(x)=-\Sigma^{-1}(x-\mu_p)\). For \(Q\ge1\), define
\begin{equation}
h(x)=\sum_{i=1}^Q A_i\big(x^{\circ i}\big)+b,
\qquad A_i\in\mathbb{R}^{d\times d},\; b\in\mathbb{R}^d.
\end{equation}
Then for any \(q\) with finite moments up to order \(Q+1\),
\begin{equation}
\mathrm{PSD}_{\mathcal{L}_1}(q\,\|\,p)=0
\;\Longleftrightarrow\;
\left\{
\begin{aligned}
&\mathbb{E}_q[X] = \mu_p,\\
&\mathbb{E}_q\!\left[(X_{[l]}-\mu_{p[l]})\,X_{[j]}^{i}\right] 
= \\ &i\,\Sigma_{lj}\,\mathbb{E}_q\!\left[X_{[j]}^{i-1}\right],
\end{aligned}
\right.
\end{equation}
for all $l,j\in\{1,\dots,d\},\; i=1,\dots,Q.$

\begin{proof}
By Proposition~\ref{prop:optimal}, the discrepancy vanishes when every Stein moment vanishes:
\[
\mathrm{PSD}_{\mathcal{L}_1}(q\,\|\,p)=0
\iff
\mu_i:=\mathbb{E}_q[\Phi_i(X)]=0 \;\; \forall i=1,\dots,Q,
\quad
\mu_b:=\mathbb{E}_q[s_p(X)]=0.
\]

Since \(p=\mathcal N(\mu_p,\Sigma_p)\),
\[
s_p(x)=-\Sigma_p^{-1}(x-\mu_p).
\]

We can write these two conditions in turn, exploiting the fact that the Gaussian score is affine map. 

\begin{enumerate}
\item  $ \mu_b=0
\iff
\mathbb E_q[X]=\mu_p.$

Indeed,
\[
\mu_b
=
-\Sigma_p^{-1}\mathbb E_q[X-\mu_p].
\]
As \(\Sigma_p^{-1}\) is invertible,
\[
\mu_b=0
\iff
\mathbb E_q[X-\mu_p]=0
\iff
\mathbb E_q[X]=\mu_p.
\]

\item
For each \(i=1,\dots,Q\), we show that $\mu_i = 0$ is equivalent to a single matrix identity:
\[
\mu_i=0
\iff
\mathbb{E}_q\!\left[(X_{[l]}-\mu_{p[l]})X_{[j]}^i\right]
=
i\,(\Sigma_p)_{lj}\,
\mathbb E_q[X_{[j]}^{i-1}]
\]
for all \(l,j\in\{1,\dots,d\}\).

Recall the definition of the feature map $\Phi_i(x)$:
\[
\Phi_i(x)
=
s_p(x)(x^{\circ i})^\top
+
i\,\mathrm{Diag}(x^{\circ(i-1)}),
\]
so its \((l,j)\)-th entry splits as
\[
[\Phi_i(x)]_{lj}
=
[s_p(x)]_l\,x_{[j]}^i
+
i\,\mathbbm 1(l=j)\,x_{[j]}^{i-1}.
\]

Substituting $[s_p(x)]_l = -\sum_m (\Sigma_p^{-1})_{lm}(x_{[m]}-\mu_{p[m]})$, we have
\[
[\Phi_i(x)]_{lj}
=
-\sum_{m=1}^d
(\Sigma_p^{-1})_{lm}
(x_{[m]}-\mu_{p[m]})x_{[j]}^i
+
i\,\mathbbm 1(l=j)x_{[j]}^{i-1}.
\]

Taking expectation of $\Phi_i(x)]_{lj}$ under $q$ (i.e., $[\mu_i]_{lj}$), it becomes
\[
[\mu_i]_{lj}
=
-\sum_{m=1}^d
(\Sigma_p^{-1})_{lm}
\mathbb E_q[(X_{[m]}-\mu_{p[m]})X_{[j]}^i]
+
i\,\mathbbm 1(l=j)\mathbb E_q[X_{[j]}^{i-1}].
\]

Define the matrix \(M^{(i)}\in\mathbb R^{d\times d}\) where $M^{(i)}_{mj}:=\mathbb E_q[(X_{[m]}-\mu_{p[m]})X_{[j]}^i]$

Then
\[
[\mu_i]_{lj}
=
-(\Sigma_p^{-1}M^{(i)})_{lj}
+
i\,\mathbbm 1(l=j)\mathbb E_q[X_{[j]}^{i-1}].
\]

Since $\mu_i=0$, we have 
\begin{equation*}
\Sigma_p^{-1}M^{(i)}
=
i\,\mathrm{diag}\!\left(
\mathbb E_q[X_{[1]}^{i-1}],
\cdots,
\mathbb E_q[X_{[d]}^{i-1}]
\right) \iff M^{(i)}
=
i\Sigma_p\mathrm{diag}\!\left(
\mathbb E_q[X_{[1]}^{i-1}],
\cdots,
\mathbb E_q[X_{[d]}^{i-1}]
\right),
\end{equation*}

as $\Sigma_p$ is invertible. Taking the \((l,j)\)-th entry yields
\[
\mathbb E_q[(X_{[l]}-\mu_{p[l]})X_{[j]}^i]
=
i\,(\Sigma_p)_{lj}\,
\mathbb E_q[X_{[j]}^{i-1}].
\]
\end{enumerate}
\end{proof}

\section{Proofs of Scaling of $\text{SNR}^2$}
\label{Appendix:SNRproof}

\textbf{Proposition 3.6. }
Let $p = \mathcal{N}(0, I_d)$ and $q = \mathcal{N}(\delta \mathbf{1}, \sigma^2 I_d)$ with fixed $(n,d)$. In regimes where $n\zeta_{1,Q} \gg \zeta_{2,Q}$, and under the moment-dominance assumption,
\begin{align}
    \mathrm{SNR}^2(Q) \propto \frac{\big((Q+1)!!\big)^2}{\big((2Q+1)!!\big)}
\end{align}

\begin{proof}[Sketch of the proof]
For \(p=\mathcal{N}(0,I_d)\), \(s_p(x)=-x\), so each \(\Phi_i(x)\) is a polynomial of degree \(i+1\). Hence
$
\sum_{i=1}^Q \|\mu_i\|_F^2 + \|\mu_b\|_2^2
$
depends on Gaussian moments up to order \(Q+1\), giving growth of order up to \(((Q+1)!!)^2\).

For sufficiently large \(Q\), the dominant term of \(\zeta_{1,Q}\) is
$
\mathbb{V}_q\!\left[
\sum_{i=1}^Q \langle \Phi_i(X),\mu_i\rangle_F
\right].
$
Since \(\Phi_i(X)\) contains degree-\((i+1)\) polynomial features, the variance involves second moments of degree-\((Q+1)\) polynomials, contributing growth of order \((2Q+1)!!\). The coefficients \(\mu_i\) themselves depend on Gaussian moments up to order \(Q+1\), contributing an additional factor of order \(((Q+1)!!)^2\). These give $\zeta_{1,Q}$ the growth of order up to $
(2Q+1)!!\,((Q+1)!!)^2$

For sufficiently large \(n\), the second term in the denominator vanishes asymptotically. As a result,
\begin{equation}
        \mathrm{SNR}^2(Q) \propto \frac{\Big(\big((Q+1)!!\big)^2\Big)^2}{(2Q+1)!!\,((Q+1)!!)^2} = \frac{\big((Q+1)!!\big)^2}{\big((2Q+1)!!\big)}.
\end{equation}
\end{proof}

\textbf{Proposition 4.2. }
The objective $J(\boldsymbol{\theta})$ is maximised by $\boldsymbol{\theta}^\star \propto \boldsymbol{\Sigma}^{-1} \boldsymbol{\mu}$. The resulting squared discrepancy $\mathrm{o}\text{-}\mathrm{PSD}_{\mathcal{L}_1}^2(q \,\|\, p)$ is $ \boldsymbol{\mu}^\top \boldsymbol{\Sigma}^{-1} \boldsymbol{\mu}$.

\begin{proof}
Applying the Cauchy-Schwarz inequality in the $\boldsymbol{\Sigma}$-norm, we have $(\boldsymbol{\theta}^\top \boldsymbol{\mu})^2 = (\boldsymbol{\theta}^\top \boldsymbol{\Sigma} \boldsymbol{\Sigma}^{-1} \boldsymbol{\mu})^2 \leq (\boldsymbol{\theta}^\top \boldsymbol{\Sigma} \boldsymbol{\theta})(\boldsymbol{\mu}^\top \boldsymbol{\Sigma}^{-1} \boldsymbol{\mu})$. Rearranging terms shows $J(\boldsymbol{\theta}) \leq \boldsymbol{\mu}^\top \boldsymbol{\Sigma}^{-1} \boldsymbol{\mu}$, where equality holds if and only if $\boldsymbol{\theta} \propto \boldsymbol{\Sigma}^{-1} \boldsymbol{\mu}$.
\end{proof}

\textbf{Theorem 4.4.}
 Let $p = \mathcal{N}(0, I_d)$ and $q = \mathcal{N}(\delta \mathbf{1}, \sigma^2 I_d)$ with $\|\delta\|_2 > 0$ and $\sigma^2 > 0.5$. Let $\Psi(x) = (\psi_1(x), \dots, \psi_{Q+1}(x))^\top$ be the vector of Stein features spanning the subspace $\mathcal{V}_Q = \mathrm{span}\{\Psi(x)\}$ ($Q \geq 1)$. The $\mathrm{SNR}(Q)^2$ of the optimally weighted statistic $u(x) = \lambda^{\star\top}\Psi(x)$ satisfies: \begin{equation} 0 < \mathrm{SNR}(Q)^2 \leq \chi^2(p\|q). \end{equation}

\begin{proof}
Define the density residual $R(x) = 1 - \frac{p(x)}{q(x)}$; under the specified $q$ and $p$, $\mathbb{E}_q[R(X)] = 0$ and $\mathbb{V}_q[R(X)] = \chi^2(p\|q) < \infty$, i.e., Pearson $\chi^2$-divergence.

For any $\mathbf{f} \in \mathcal{V}_Q$, let $g(x) = \mathcal{L}_{1,p}\mathbf{f}(x)$. Under the Gaussian setting, and $R(x)$ are all square-integrable under $q$. Assuming tail conditions, by Stein's identity, $\mathbb{E}_p[g(X)] = 0$. It is possible to show that:

\begin{align*}
\mathbb{E}_q[g(X)] &= \int g(x) q(x) \, dx \\
&= \int g(x) \left( 1 - \frac{p(x)}{q(x)} + \frac{p(x)}{q(x)} \right) q(x) \, dx \\
&= \int g(x) \left( 1 - \frac{p(x)}{q(x)} \right) q(x) \, dx + \int g(x) p(x) \, dx \\
&= \mathbb{E}_q[g(X) R(X)] + \mathbb{E}_p[g(X)] \\
&= \mathbb{E}_q[g(X) R(X)] + 0 && \text{(by Stein's identity)} \\
&= \mathbb{E}_q[g(X) R(X)] - \mathbb{E}_q[g(X)] \mathbb{E}_q[R(X)] && \text{(since } \mathbb{E}_q[R(X)] = 0 \text{)} \\
&= \mathbb{COV}_q[g(X), R(X)].
\end{align*}

Using this identity, the $\text{SNR}^2$ is expressed as:
\begin{align*}
\mathrm{SNR}(
\textbf{f}
)^2 &= \frac{\mathbb{E}_q[g(X)]^2}{\mathbb{V}_q[g(X)]} \\
&= \frac{\mathbb{COV}_q[g(X), R(X)]^2}{\mathbb{V}_q[g(X)]} \\
&\le \frac{\mathbb{V}_q[g(X)] \mathbb{V}_q[R(X)]}{\mathbb{V}_q[g(X)]} && \text{(by Cauchy-Schwarz)} \\
&= \mathbb{V}_q[R(X)] = \chi^2(p\|q).
\end{align*}

To prove that $\text{SNR}(Q)^2 > 0$, it is sufficient to show that there exists an \textbf{f} $\in \mathcal{V}_Q$ with non-zero SNR. We focus on the optimal solution of $\text{PSD}_{\mathcal{L}_1}$ with $Q = 1$.

Given $p = \mathcal{N}(0, I_d)$, $q = \mathcal{N}(\delta \mathbf{1}, \sigma^2 I_d)$, and the score function $s_p(x) = -x$, the population Stein moments are:
\begin{align*}
\mu_b &= \mathbb{E}_q[-x] = -\delta \mathbf{1} \\
\mu_1 &= \mathbb{E}_q[-XX^\top + I_d] \\
&= I_d - \mathbb{E}_q[XX^\top] \\
&= I_d - \left( \mathbb{COV}_q(X) + \mathbb{E}_q[X]\mathbb{E}_q[X]^\top \right) \\
&= I_d - \left( \sigma^2 I_d + (\delta \mathbf{1})(\delta \mathbf{1})^\top \right) \\
&= I_d - \sigma^2 I_d - \delta^2 \mathbf{1}\mathbf{1}^\top \\
&= (1 - \sigma^2)I_d - \delta^2 \mathbf{1}\mathbf{1}^\top
\end{align*}
The optimal parameters are scaled by $A^\star = \frac{\mu_1}{Z}$ and $b^\star = \frac{\mu_b}{Z}$. Evaluating the operator yields:
\begin{align*}
\mathcal{L}_1 h^\star(x) &= -x^\top A^\star x - x^\top b^\star + \mathrm{Tr}(A^\star) \\
&\propto -x^\top \left[ (1 - \sigma^2)I_d - \delta^2 \mathbf{1}\mathbf{1}^\top \right] x - x^\top (-\delta \mathbf{1}) + \mathrm{Tr}\left( (1 - \sigma^2)I_d - \delta^2 \mathbf{1}\mathbf{1}^\top \right) \\
&= -(1 - \sigma^2)x^\top I_d x + \delta^2 x^\top \mathbf{1}\mathbf{1}^\top x + \delta (\mathbf{1}^\top x) + \mathrm{Tr}\left((1 - \sigma^2)I_d\right) - \mathrm{Tr}\left(\delta^2 \mathbf{1}\mathbf{1}^\top\right) \\
&= (\sigma^2 - 1)\|x\|_2^2 + \delta^2 (\mathbf{1}^\top x)(\mathbf{1}^\top x) + \delta (\mathbf{1}^\top x) + d(1 - \sigma^2) - \delta^2 \mathrm{Tr}\left(\mathbf{1}\mathbf{1}^\top\right) \\
&= (\sigma^2 - 1)\|x\|_2^2 + \delta^2 (\mathbf{1}^\top x)^2 + \delta (\mathbf{1}^\top x) + d(1 - \sigma^2) - d\delta^2 \\
&= \left[ (\sigma^2 - 1)\|x\|_2^2 + \delta^2 (\mathbf{1}^\top x)^2 + \delta (\mathbf{1}^\top x) + d(1 - \sigma^2 - \delta^2) \right]
\end{align*}
Given that 
\begin{align*}
&\mathbb{E}_q\left[\|X\|_2^2\right] = d\sigma^2 + d\delta^2 \\
&\mathbb{E}_q\left[(\mathbf{1}^\top X)^2\right] = 
\mathbb{V}_q(\mathbf{1}^\top X) + \mathbb{E}_q[\mathbf{1}^\top X]^2 = d\sigma^2 + d^2\delta^2 \\ 
&\mathbb{E}_q\left[\mathbf{1}^\top X\right] = d\delta
\end{align*}
Substituting these into the bracketed expression yields:
\begin{align*}
\mathbb{E}_{q,h^*}[g(X)] &= \mathbb{E}_q[\mathcal{L}_1 h^\star(X)] \\ &=\mathbb{E}_q\left[ (\sigma^2 - 1)\|X\|_2^2 + \delta^2 (\mathbf{1}^\top X)^2 + \delta (\mathbf{1}^\top X) + d(1 - \sigma^2 - \delta^2) \right] \\
&= (\sigma^2 - 1)(d\sigma^2 + d\delta^2) + \delta^2(d\sigma^2 + d^2\delta^2) + \delta(d\delta) + d(1 - \sigma^2 - \delta^2) \\
&= d\sigma^4 + d\sigma^2\delta^2 - d\sigma^2 - d\delta^2 + d\sigma^2\delta^2 + d^2\delta^4 + d\delta^2 + d - d\sigma^2 - d\delta^2 \\
&= (d\sigma^4 - 2d\sigma^2 + d) + 2d\sigma^2\delta^2+ d^2\delta^4 - d\delta^2 \\
&= d(\sigma^2 - 1)^2
+ d\delta^2(2\sigma^2 - 1)
+ d^2\delta^4, \\
&\geq d(\sigma^2 - 1)^2
+ d\delta^2(2\sigma^2 - 1)
+ d\delta^4, \\
&\propto  (\sigma^2 - 1)^2
+ \delta^2(2\sigma^2 - 1)
+ \delta^4 \\
&= (\sigma^2 - 1)^2
+ 2\delta^2(\sigma^2 - 1) 
+ \delta^4 + \delta^2 \\
&= (\sigma^2 - 1 + \delta^2)^2 + \delta^2,
\end{align*}
which is non-zero unless $\{\delta, \sigma^2\} = \{0,1\}$. This is also guaranteed by Theorem \ref{thm:BvM_PSD_degreeQ_general}; future versions will explicitly include a lower bound in terms of SNR. Consequently,

\begin{align*}
\mathrm{SNR}(
Q)^2 &= \max_\lambda\frac{\mathbb{E}_q[g(X)]^2}{\mathbb{V}_q[g(X)]} \\
&\geq  \frac{\mathbb{E}_{q,h^*}[g(X)]^2}{\mathbb{V}_{q,h^*}[g(X)]} \\ &> 0 && \text{     (since $\delta \neq 0)$}.
\end{align*}

\end{proof}

\section{Pseudocode of Algorithms}

\label{section:algorithms}

\begin{algorithm}[H]
\caption{PSD$_{\mathcal{L}_1}$ GoF $z$-test}
\label{alg:psd_train_testzzz}
\KwIn{i.i.d., samples $\{x_j\}_{j=1}^n \sim q$, score function $s_p(\cdot)$}
\KwOut{Test statistic and $p$-value}
Randomly split $\{x_j\}_{j=1}^n$ into training set $\mathcal{D}_{\mathrm{tr}}$
and test set $\mathcal{D}_{\mathrm{te}}$ \;

\tcp{Training}
Compute empirical Stein moments $\{\hat{\mu}_i\}, \hat{\mu}_b$ on $\mathcal{D}_{\mathrm{tr}}$\;
Compute optimal maximiser $\{A_1^\star, \cdots,A_q^\star, b^\star\}$ in closed form (as in Proposition \eqref{prop:optimal})\;

\tcp{Testing}
Evaluate $u_j = \sum_{i=1}^{Q}
\langle A^*_i, \Phi_i(x) \rangle_F
+
b^{*\top} s_p(x)$ for all $x_j\in\mathcal{D}_{\mathrm{te}}$\;
Compute the one-sample $z$-statistic
\begin{equation}
z = \frac{\sqrt{|\mathcal{D}_{\mathrm{te}}|}\,\bar{u}}{s_u}
\end{equation}

where $\bar{u}=|\mathcal{D}_{\mathrm{te}}|^{-1}\sum_j u_j$
and $s_u^2=(|\mathcal{D}_{\mathrm{te}}|-1)^{-1}\sum_j (u_j-\bar{u})^2$\;
Return the $z$-statistic and the corresponding two-sided $p$-value\;

\end{algorithm}

\begin{algorithm}
\footnotesize
\setlength{\baselineskip}{0.9\baselineskip}
\caption{Bernoulli-Gaussian Cross Entropy Method (BG-CEM) for $\lambda$ optimisation}
\label{alg:bg_cem}
\KwIn{Feature matrix $G \in \mathbb{R}^{n \times (Q+1)}$ (see Algorithm \ref{alg:psd_train_test}), maximum number of iterations $T = 100$, candidates $M = 1000$, elite size $K = 10$, tolerance $\epsilon = 10^{-3}$, patience $p = 3$}
\KwOut{Optimised weights $\hat{\lambda}$, sign (i.e., $\pm 1$)}

$n \leftarrow \mathrm{nrow}(G)$\;

Precompute sample mean and sample covariance matrix of $G$
\[
\widehat{\mu_G} \leftarrow \frac{1}{n} G^\top \mathbf{1}_n,
\qquad
\widehat{\Sigma_G} \leftarrow \frac{1}{n} G^\top \left(I_n - \frac{1}{n}\mathbf{1}_n \mathbf{1}_n^\top \right) G
\]

Initialise activation probabilities $p_k = 0.5$, means $\mu_k = 0$, and scales $\sigma_k = 1$ for $k=1,\dots,Q+1$\;

Set best score $s^\star \leftarrow -\infty$, best $\lambda^* \leftarrow \text{null}$\;

\For{$t = 1,\dots,T$}{

\For{$m = 1,\dots,M$}{

Sample $z_k^{(m)} \sim \mathrm{Bernoulli}(p_k)$ and $\epsilon_k^{(m)} \sim \mathcal{N}(\mu_k,\sigma_k^2)$ independently\;

Form candidate
\[
\lambda^{(m)}_k = z_k^{(m)} \cdot \epsilon_k^{(m)}, \quad k=1,\dots,Q+1
\]
and normalise $\lambda^{(m)} \leftarrow \lambda^{(m)} / \|\lambda^{(m)}\|$\;

Compute objective using precomputed moments
\[
s^{(m)} = \frac{(\widehat{\mu_G}^\top \lambda^{(m)})^2}{\lambda^{(m)\top} \widehat{\Sigma_G} \lambda^{(m)}}
\]

Update best solution $\lambda^*$ if $s^{(m)} > s^\star$\;

}

Select elite set $\mathcal{A}$ of top $K$ candidates according to $s^{(m)}$\;

Early stop if score improvement is less than $\epsilon$ for $p$ iterations\;

Update parameters:
\[
p_k \leftarrow \frac{1}{|\mathcal{A}|} \sum_{\lambda \in \mathcal{A}} \mathbf{1}\{|\lambda_k| > 0\},
\quad \mu_k \leftarrow \mathrm{mean}\big(\{\lambda_k : \lambda \in \mathcal{A}\}\big), 
\quad
\sigma_k \leftarrow \mathrm{sd}\big(\{\lambda_k : \lambda \in \mathcal{A}\}\big)
\]

}

Return $\{\lambda^*, \text{sign}(\widehat{\mu_G}\lambda^*)\}$
\end{algorithm}

\begin{algorithm}[H]

\footnotesize
\setlength{\baselineskip}{0.9\baselineskip}

\caption{$\lambda$-PSD$_{\mathcal{L}_1}$ GoF test}
\label{alg:psd_train_test}
\KwIn{Samples $\{x_j\}_{j=1}^n \sim q$, score function $s_p(\cdot)$}
\KwOut{Test statistic and $p$-value}

Randomly split $\{x_j\}_{j=1}^n$ into training set $\mathcal{D}_{\mathrm{tr}}$
and test set $\mathcal{D}_{\mathrm{te}}$\;

Further split $\mathcal{D}_{\mathrm{tr}}$ into $\mathcal{D}_1$ and $\mathcal{D}_2$\;

\tcp{Training (Stage 1)}

Estimate coefficients $\{\hat{A}^*_i\}_{i=1}^Q$ and $\hat{b}^*$ using $\mathcal{D}_1$\;

\tcp{Training (Stage 2)}
Initialise $G \in \mathbb{R}^{n_2 \times (Q+1)}$\;
\For{$k \leftarrow 1$ \KwTo $n_2$}{
    \For{$i \leftarrow 1$ \KwTo $Q$}{
        $G_{[k,i]} \leftarrow \langle \hat{A}_i^\star, \Phi_i(x_k) \rangle_F$\;
    }
    $G_{[k,Q+1]} \leftarrow (\hat{b}^\star)^\top s_p(x_k)$\;
}

Get $\{\lambda^*_1, \cdots,\lambda^*_{Q+1}\} $ from $\text{BG-CEM}(G)$ \\

\tcp{Testing}
Evaluate $u_j = \sum_{i=1}^Q \lambda_i^* \langle \hat{A}_i^\star, \Phi_i(x)\rangle_F
+
\lambda^*_{Q+1} (\hat{b}_i^\star)^\top s_p(x)$
for all $x_j \in \mathcal{D}_{\mathrm{te}}$\;

Compute the one-sample $z$-statistic
\begin{equation}
z = \frac{\sqrt{|\mathcal{D}_{\mathrm{te}}|}\,\bar{u}}{s_u}
\end{equation}

where $\bar{u}=|\mathcal{D}_{\mathrm{te}}|^{-1}\sum_j u_j$
and $s_u^2=(|\mathcal{D}_{\mathrm{te}}|-1)^{-1}\sum_j (u_j-\bar{u})^2$\;

Return the $z$-statistic and the corresponding two-sided $p$-value\;

\end{algorithm}

\section{Additional Results}
\label{Appendix:additional}

We additionally report supplementary results for Experiment~1 in which the dimension is fixed at $d=16$, while the sample size remains as specified in Section~\ref{section:exp}. Here, we vary the polynomial order over $Q \in \{0,1,\ldots,8\}$. As shown in Figure~\ref{fig:msnr_rejection_grid}, $\lambda$-$\mathrm{PSD}_{\mathcal{L}_1}$ maintains relatively stable rejection behaviour and $\log_{10}(\mathrm{MSNR})$ as $Q$ increases, whereas $\mathrm{PSD}_{\mathcal{L}_1}$ exhibits substantially larger fluctuations across polynomial orders. These results further support the superior robustness of $\lambda$-$\mathrm{PSD}_{\mathcal{L}_1}$ with respect to  $Q$.

\begin{figure}[H]
\centering
\includegraphics[width=\textwidth]{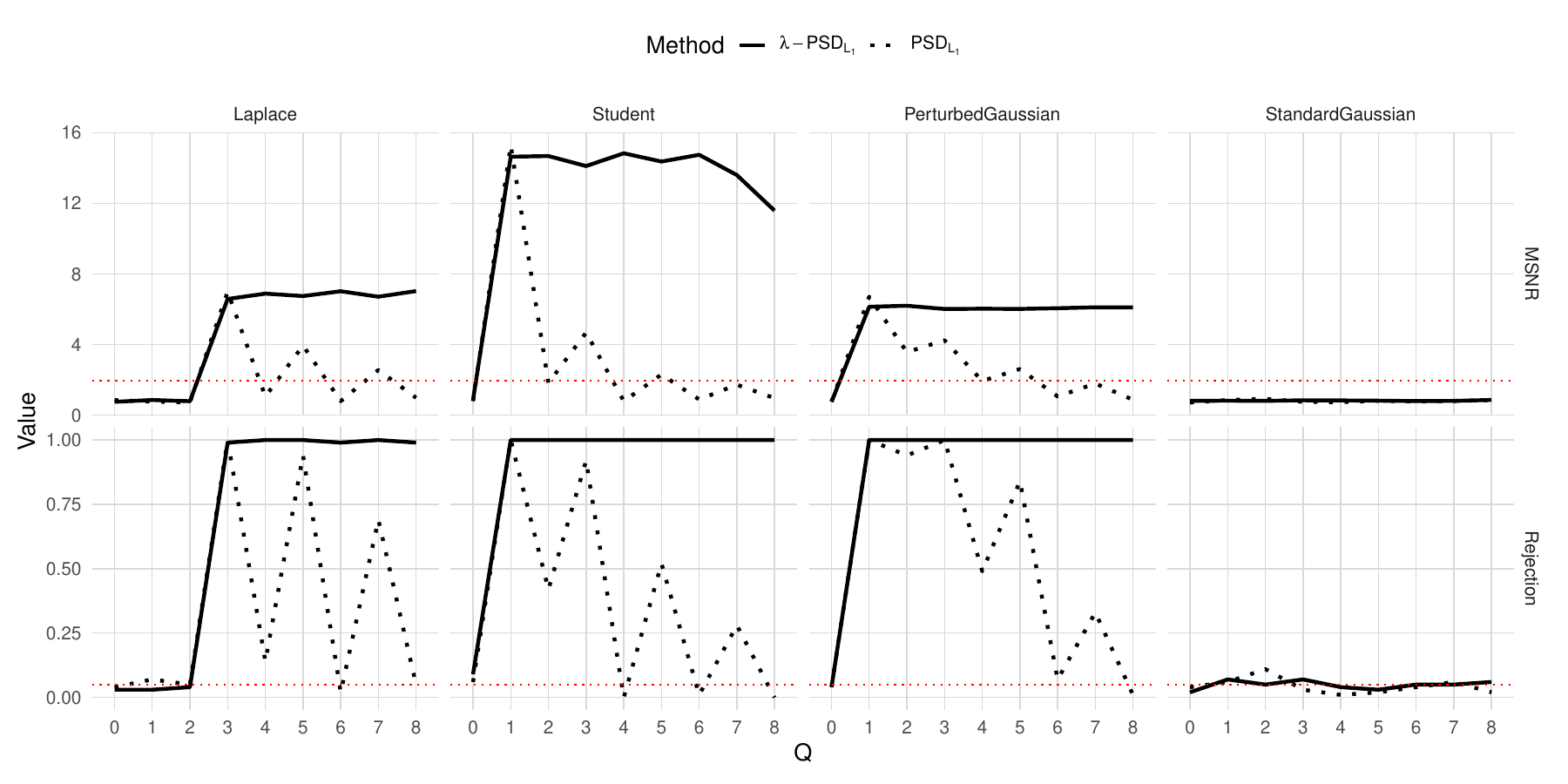}
\caption{Experiment 1: Comparison of rejection rates and $\log_{10}(\mathrm{MSNR})$ between $\mathrm{PSD}_{\mathcal{L}_1}$ and $\lambda$-$\mathrm{PSD}_{\mathcal{L}_1}$ for varying polynomial orders $Q \in \{0,1,\ldots,8\}$ at fixed dimension $d=16$.}
\label{fig:msnr_rejection_grid}
\end{figure}

\end{document}